\newtheorem{lemma}{Lemma}
\newtheorem{proof}{Proof}
\newtheorem{theorem}{Theorem}
\newtheorem{remark}{Remark}
\newtheorem{definition}{Definition}
 \journal{AIDE}
\begin{document}

\title{Dynamics of a Stochastic COVID-19 Epidemic Model with Jump-Diffusion \tnoteref{mytitlenote}}
\author[mymainaddress,mysecondaryaddress]{Almaz Tesfay\corref{mycorrespondingauthor}}
\ead{amutesfay@hust.edu.cn}
\author[myfourthaddress]{Tareq Saeed}
\ead{tsalmalki@kau.edu.sa}
\author[mythirdaddress]{Anwar Zeb}
\ead{anwar@cuiatd.edu.pk}
\author[mymainaddress,mysecondaryaddress]{Daniel Tesfay}
\ead{dannytesfay@hust.edu.cn}
\author[mymainaddress]{Anas Khalaf}
\ead{anasdheyab@hust.edu.cn}
\author[myfifthdaddress]{James Brannan}
\ead{jrbrn@clemson.edu}
\cortext[mycorrespondingauthor]{Corresponding author}
\address[mymainaddress]{School of Mathematics and Statistics \& Center for Mathematical Sciences,  Huazhong University of Science and Technology, Wuhan 430074, China}
\address[mysecondaryaddress]{Department of Mathematics, Mekelle University, P.O.Box 231, Mekelle, Ethiopia}
\address[myfourthaddress]{Department of Mathematics,
King Abdulaziz University, Jeddah, 41206, Kingdom of Saudi Arabia}
\address[mythirdaddress]{Department of Mathematics, COMSATS University Islamabad, Abbottabad Campus, Abbottabad, 22060, Khyber Pakhtunkhwa, Pakistan}
\address[myfifthdaddress]{Department of Mathematical Sciences, Clemson University,Clemson, South Carolina 29634, USA}
\begin{abstract}
For a stochastic COVID-19 model with jump-diffusion, we prove the existence and uniqueness of the global positive solution. We also investigate some conditions for the extinction and persistence of the disease.  We calculate the threshold of the stochastic epidemic system which determines the extinction or permanence of the disease  at different intensities of the stochastic noises. This threshold is denoted by $\xi$ {which} depends on the white and jump noises. {The effects of these noises on the dynamics of the model are studied.} {The numerical experiments
show that the random perturbation introduced in the stochastic model suppresses disease outbreak as compared to its deterministic counterpart. In other words, the impact of the noises on the extinction and persistence is high.} When the noise is large or small, our numerical findings show that the COVID-19 vanishes from the {population} if $\xi <1;$ whereas {the epidemic can't go out of control} if $\xi >1.$ From this, we observe that white noise and jump noise have a significant effect on the spread of COVID-19 infection, \textcolor{black}{i.e., we can conclude that the stochastic model is more realistic than the deterministic one. Finally,}
to illustrate this phenomenon, we put some numerical simulations.
\end{abstract}
\begin{keyword}
Brownian motion; L\'evy noise; stochastic COVID-19 model; extinction; persistence.
\MSC[2020] 39A50, 45K05, 65N22.
\end{keyword}
\maketitle

\section{Introduction}

Infectious diseases are the public enemy of \textcolor{black}{ the} human population and have brought \textcolor{black}{a} great impact \textcolor{black}{ on} mankind. In the present time, the novel coronavirus is the major disease in the world. This new strain of coronavirus is called COVID-19 or SARS-Cov2.  COVID-19 has been declablack \textcolor{black}{by the World Health Organization} as a global emergency  \textcolor{black}{in} \textcolor{black}{ January 2020}, and a pandemic on {March 2020}  \cite{SOHRABI202071}. Since the first breakout of the pandemic, according to the data released by World barometer \cite{WHO2020Nov}, there are more than 52 million confirmed (from which 17 million are active) cases, 1.29 million deaths and 33.5 million recoveries from the disease.

 Researchers are working around the clock to understand the nature of the disease deeply. Scientists are also battling to produce a vaccine {for} this new virus.

 Numerous scholars have conducted investigation to pblackict the spread of the COVID-19 in order to seek the best prevention measures. For example \cite{2020A,hou2020effectiveness,kucharski2020early,okuonghae2020analysis,aa,yang2020mathematical} studied mathematical models of COVID-19 to describe the spread of the coronavirus. {Stochastic transition models  were established in \cite{chen2020mathematical,wang2020phase,anas} to evaluate the spread of COVID-19. The importance of isolation and quarantine was also  emphasized in those articles. Dalal et al. \cite{dalal2008stochastic}  studied the impact of the environment in the AIDS model  using the method of parameter perturbation.} In papers \cite{at1,at2,at3,at4,d2,d3,d4,d5} fractal-fractional differentiation and integration is discussed. This approach is very important in investigating the stochastic COVID-19 model.

Stochastic dynamical systems are widely used to describe different complex phenomena. The random fluctuations in complex phenomena usually portray intermittent jumps i,e., the noises are non-Gaussian.  In other words, epidemic models are inevitably subject to environmental noise and
it is necessary to reveal how the environmental noise influences the epidemic model. { In the natural world, there are different types of random noises, such as the well known white noise, the L\'evy jump noise which considers the motivation that the continuity of solutions may be inevitably} under severe environmental perturbations, such as earthquakes, floods, volcanic eruptions, SARS, influenza  \cite{bao2012stochastic,liu2018analysis,zhang2017dynamics} and a jump process should be introduced to prevent and control diseases, and so on. Mathematically, {several authors} \cite{sun2020dynamics,sun2020dy,danny,danny1} {used}
the L\'evy process to describe the phenomena that cause a big jump to occur occasionally.

Recently, {Zhang et al}. \cite{zhang2020dynamics} investigated the  stochastic COVID-19 mathematical model driven by Gaussian noise. The authors assumed that environmental fluctuations in the constant $\beta$, so that $\beta \longrightarrow \beta +\lambda \dot {B}_t$ where $B_t$ is a one dimensional Brownian motion \cite{tesfay2021logistic}. The stochastic COVID-19 model which they consideblack is
\begin{align}\label{CV-BM}
 &dS_t= (\Lambda-\beta\,S_t\,I_t-\nu\,S_t+\sigma \,R_t)dt-\lambda\,S_t\,I_t\,dB_t, \nonumber\\
 &dI_t=( \beta\,S_t\,I_t-(\nu+\gamma)\,I_t)dt+\lambda\,S_t\,I_t\,dB_t,\nonumber\\
 &dR_t=(\gamma\,I_t-(\nu+\sigma)\,R_t)dt,
\end{align}
where the variables $S_t,\, I_t$, and $R_t$ represent the susceptible population, infectious population, and recoveblack (removed) population, respectively. The parameters $\Lambda,\, \beta,\, \nu,\, \gamma$ and  $\sigma$ are all positive constant numbers, and they represent the joining rate of {the} population to susceptible class through birth or migration, {the} rate at which the susceptible tend to infected class (like social distancing $\beta\in(0,1)$), due to natural cause and from COVID-19, the recovery rate, and the rate of health deterioration, respectively. $B_t$ is standard Brownian motion defined on the complete probability space $(\Omega, \mathcal{F},\{\mathcal{F}_t\}_{t\geq 0}, \mathbb{P})$ and $\lambda$ is {the} intensity of the Gaussian noise. The { researchers} proved the existence and uniqueness of the non-negative solution of the system (\ref{CV-BM}), and they also showed the extinction and persistence of the disease. But they did not consider the jump noise.

Since, the stochastic model (\ref{CV-BM}) that does not take randomness can not efficiently model these phenomena. The L\'evy noise, which is more comprehensive, is a better candidate \cite{applebaum2009levy}.

Here, we consider that the environmental Gaussian and non-Gaussian noises are directly proportional to the state variables $S_t, \, I_t,$ and $R_t.$ Several scholars used this approach, for instance, we refer to { \cite{berrhazi2018stochastic,kiouach2020long,zhou2016threshold}} and references therein. The system which we consider has the following form:
\begin{align}\label{CV-BM-LM}
 &dS_t= (\Lambda-\beta\,S_{t-}\,I_{t-}-\nu\,S_{t-}+\sigma \,R_{t-})\,dt+\lambda_1\,S_{t-}\,dB^1_t +\int_{\mathbb{Y}} \epsilon_{1}(y)S_{t-}\,\bar{N}(dt,dy),\nonumber\\
 &dI_t=( \beta\,S_{t-}\,I_{t-}-(\nu+\gamma)\,I_{t-})\,dt+\lambda_2\,I_{t-}\,dB^2_t+\int_{\mathbb{Y}} \epsilon_{2}(y)I_{t-}\,\bar{N}(dt,dy),\nonumber\\
 &dR_t=(\gamma\,I_{t-}-(\nu+\sigma)\,R_{t-})\,dt+\lambda_3\,R_{t-}\,dB^3_t+\int_{\mathbb{Y}} \epsilon_{3}(y)R_{t-}\,\bar{N}(dt,dy),
\end{align}
where $S_{t-}$ is the left limit of $S_t$. The description of the parameters $\Lambda,\, \beta,\, \nu,\, \gamma$ and  $\sigma$  are the same as in {the} model (\ref{CV-BM}). For $j=1,2,3$,  $\epsilon_{j}(y)$ is a bounded function satisfying $\epsilon_{j}(y)+1>0$ on the intervals $|y|\geq 1$ or $|y|<1$. $N(t,dy)$ is the independent Poisson random measure on $\mathbb{R}^+\times{{\mathbb{R}}\setminus{\{0\}}}$, $\bar{N}(t,dy)$ is the compensated Poisson random  measure satisfying $\bar{N}(t,dy)=N(t,dy)-\pi(dy)dt$, where $\pi(.)$ is a $\delta$-finite measure on a measurable subset $\mathbb{Y}$ of $(0,\infty)$ and $\pi(\mathbb{Y})<\infty$,  \cite{applebaum2009levy,tesfay2020mean}. $B^j_t$ are mutually independent standard Brownian motion and $\lambda_j$ stand for the intensities of the Gaussian noise,  \cite{duan2015introduction}. To the best of our {knowledge,} this model is not studied before.

In this study, we are going to investigate the stochastic COVID-19 model  with jump-diffusion (\ref{CV-BM-LM}). { The existence of the  solution of the stochastic model (\ref{CV-BM-LM})
is analyzed. We use the Euler Maruyama (EM) method, which is proposed in \cite{higham2001algorithmic,1992Higher}, after revising and changing it a bit to fit our model. The consistency, convergence, and stability of this numerical method is also proved in the afore-mentioned papers. This method helps to evaluating explanations based on the notion of adversarial robustness.} { Using numerical simulations, we study the impact of the deterministic parameters and noise intensities on the proposed system. We think this is a better tool to demonstrate the interactions between the epidemic system and its complex surrounding. Especially, we
focus on the extinction and persistence of the  SARS-Cov2 and present the biological interpretations. The evaluation criteria further allows us to derive new explanations which capture pertinent features qualitatively and quantitatively. From the plotted figures, we can observe that the noise intensities have a great impact on the systems (\ref{CV-Det}) and (\ref{CV-BM-LM}). More details are given in Sections \ref{Sec3} and \ref{s6}.}

The goal of the present work is to make contributions to understand the dynamics of the novel disease (COVID-19) epidemic models with both Gaussian and non-Gaussian noises, {i.e., we aspire to study the effect of Gaussian noise and jumps intensities on COVID-19 epidemic.}

The rest of the paper is constituted as follows. In Section \ref{s2}, we recall some important notations and lemmas. In section \ref{determ}, we {discuss} the dynamical behaviour of the deterministic COVID-19 model. Section \ref{Sec3} has two subsections. The existence and uniqueness of the solution of the stochastic COVID-19 model (\ref{CV-BM-LM}) is given in subsection \ref{s3}. While in Subsection \ref{s4}, by finding the value of the threshold, we show the conditions for the extinction and persistence to COVID-19. The discussion and numerical experiments of our work are given in Section \ref{s6}. {Finally, we put conclusion of our study in Section \ref{con}.}

\section{Preliminaries}\label{s2}
In this section, we will introduce some basic notations and lemmas. Throughout this paper, we have

\begin{description}
  \item[a.] $(\Omega, \mathcal{F},\{\mathcal{F}_t\}_{t\geq 0}, \mathbb{P})$ denotes a complete  filteblack probability space;
  \item[b.] $\mathbb{R}^3_{+}:=\{x=(x_1,x_2,x_3)\in \mathbb{R}^3: x_j\geq 0, \, j=1,2,3\}$, \,\, $\mathbb{R}_+=(0,\infty)$;
  \item[c.] For the jump-diffusion, let $n\geq 0$, there is a positive constant $L_n$ such that

  (i.) $\int_{\mathbb{Y}} |H_{j}(x,y)-H_{j}(\bar{x},y)|^2\,\pi (dy) \leq L_n\,|x-\bar{x}|^2$ where $H_{j}(x,y)=\epsilon_{j}(y)\,X_t,\, \, j=1,2,3$. For more details we refer to \cite[P. 78]{Siakalli2009Stability}, \cite{0Impulsive};\\
 (ii.) $1+\epsilon_{j}(y)\geq 0, \, y\in \mathbb{Y}, j=1,2,3$, there exists $C > 0$ such that $\int_{\mathbb{Y}} (\ln(1+\epsilon_{j}(y)))^2\,\pi(dy) < C$;
  \item[d.] $<M>_{t}=\frac{1}{t}\int_{0}^{t}M_rdr$,\,\,\, $<M>^{*}_{t}=lim _{t\rightarrow \infty}inf \frac{1}{t}\int_{0}^{t}M_rdr$,\,\,\, $<M>_{t}^{**}=lim _{t\rightarrow \infty}sup \frac{1}{t}\int_{0}^{t}M_rdr$ ;
   \item[e.] For $j=1,2,3$, $\varphi_{j}=\frac{\lambda^2_{j}}{2}+\int_{\mathbb{Y}} (\epsilon_{j}(y)-\ln(1+\epsilon_{j}(y)))\,\pi(dy), \, j=1,2,3$;
   \item[f.]\label{f} $\psi_{j}=\int_{\mathbb{Y}} (\ln(1+\epsilon_{j}(y)))\,\bar{N}(dt,dy)$,\quad $ <\psi_j,\psi_j>=t\int_{\mathbb{Y}} (\ln(1+\epsilon_{j}(y)))\,\pi(dy)<t\,C$;

   \item[g.] For some positive $m>2$, \,\ $M=\nu-\frac{m-1}{2}\,\bar{\Lambda}^2-\frac{1}{m}\,\bar{\epsilon}$, \, where $\bar{\Lambda}=max\{\lambda_1^2, \lambda_2^2, \lambda_3^2\}$, and $\bar{\epsilon}=\int_{\mathbb{Y}}(1+\tilde{\epsilon})^m-1-m\,\hat{\epsilon}\,\,\pi(dy)$, where $\tilde{\epsilon}=max\{\epsilon_1(y), \epsilon_2(y), \epsilon_3(y)\}$, and $\hat{\epsilon}=min\{\epsilon_1(y),\epsilon_2(y),\epsilon_3(y)\}$
   \item[h.] $inf \emptyset=\infty$\, where $\emptyset$ denotes empty set.
\end{description}
\begin{remark}\label{rem-1}
For some positive $x$, the following is true, $x-1-\ln x >0.$
\end{remark}
\begin{lemma}\label{Ito-form}
(The one dimensional It$\hat{o}$ formula). Here we will give It$\hat{o}$ formula for the following $n$-dimensional stochastic differential  equation (SDE) with jump noise \cite{applebaum2009levy}
\begin{align}\label{SDE-Ito}
dY(t)=G(Y(t))dt+F(Y(t))dB_t+\int_{|y|<1}H(Y(t),y)\bar{N}(dt,dy) \quad t\geq 0,
\end{align}
where $G:\mathbb{R}_+\times \mathbb{R}^n\rightarrow\mathbb{R}^n$,\quad $F:\mathbb{R}_+\times \mathbb{R}^n\rightarrow\mathbb{R}^n\times\mathbb{R}^d$,\quad $H:\mathbb{R}_+\times \mathbb{R}^n\times\mathbb{R}^n\rightarrow\mathbb{R}^n$, \quad for $n\geq 2$\quad are consideblack as measurable.

Assume $Y$ be a solution of the SDE (\ref{SDE-Ito}). Then, for each $W \in C^2(\mathbb{R}^n)$, $t\in[0,\infty)$, with probability one,  we have \cite{Siakalli2009Stability}
\begin{align*}
W(Y(t))-W(Y(0))&=\int_{0}^{t} \partial_{j}W(Y_{c}(r^{-}))dY^j +\frac{1}{2}\int_{0}^{t}\partial_{j}\partial_{i}W(Y_{c}(r^{-}))d[Y^j_{c},Y^i_{c}](r)\nonumber\\
&+\int_{0}^{t}\int_{|y|<1}[W(Y(r^-)+H(Y(r),y))-W(Y(r^{-}))]\bar{N}(dr,dy)\nonumber\\
&+\int_{0}^{t}\int_{|y|<1}[W(Y(r^{-})+H(Y(r),y))-W(Y(r^{-}))-H^i(Y(r),y)\,\partial_{i}W(Y(r^{-}))]\pi(dy)dr,
\end{align*}
\end{lemma}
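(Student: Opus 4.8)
The plan is to follow the classical route to the It\^o formula for semimartingales with jumps, as in \cite{applebaum2009levy}, exploiting the fact that here the jump measure is finite ($\pi(\mathbb{Y})<\infty$), so that on every compact time interval the process $Y$ has only finitely many jumps almost surely. First I would reduce to a bounded situation by localization: since $W\in C^2(\mathbb{R}^n)$ need not have bounded derivatives, introduce the stopping times $\tau_m=\inf\{t\ge 0:|Y(t)|\ge m\}$, prove the identity on $[0,t\wedge\tau_m]$, where $W$, $\nabla W$, $\nabla^2W$ are bounded and every stochastic integral is a genuine square-integrable martingale, and then let $m\to\infty$ using $\tau_m\uparrow\infty$ a.s.\ (which follows from non-explosion of $Y$, available from the existence theorem of Section~\ref{s3}).

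Next, on the event $\{\tau_m>t\}$ enumerate the successive jump times $0<T_1<T_2<\cdots$ of $Y$ in $[0,t]$ (finitely many a.s., as $\pi$ is finite) and decompose $[0,t]=\bigcup_k\big[T_k\wedge t,\,T_{k+1}\wedge t\big)$ with $T_0=0$. On each half-open interval the process has no jumps, so there $Y$ is a continuous It\^o process moving along its continuous part $Y_c$, namely $dY^j=\big(G^j(Y)-\int_{|y|<1}H^j(Y,y)\,\pi(dy)\big)dt+F^j(Y)\,dB_t$. Applying the ordinary (continuous) It\^o formula on $[T_k\wedge t,\,T_{k+1}\wedge t)$ yields
\[
W(Y((T_{k+1}\wedge t)^-))-W(Y(T_k\wedge t))=\int_{T_k\wedge t}^{T_{k+1}\wedge t}\partial_jW(Y_c(r))\,dY_c^j(r)+\tfrac12\int_{T_k\wedge t}^{T_{k+1}\wedge t}\partial_i\partial_jW(Y_c(r))\,d[Y_c^i,Y_c^j](r).
\]
At each jump time $T_{k+1}\le t$ the increment of $W\circ Y$ is the pure jump $W(Y(T_{k+1}))-W(Y(T_{k+1}^-))=W(Y(T_{k+1}^-)+H(Y(T_{k+1}),y_{k+1}))-W(Y(T_{k+1}^-))$, where $y_{k+1}$ is the mark of that jump. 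Summing the continuous pieces and the jump pieces telescopes to $W(Y(t))-W(Y(0))$: the continuous pieces assemble into the first two integrals of the claimed formula, and the jump pieces into $\int_0^t\int_{|y|<1}\big(W(Y(r^-)+H)-W(Y(r^-))\big)\,N(dr,dy)$.

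Finally I would pass to the compensated form. Writing $N(dr,dy)=\bar N(dr,dy)+\pi(dy)\,dr$ splits the jump sum into the martingale term $\int_0^t\int_{|y|<1}\big(W(Y(r^-)+H)-W(Y(r^-))\big)\,\bar N(dr,dy)$ plus $\int_0^t\int_{|y|<1}\big(W(Y(r^-)+H)-W(Y(r^-))\big)\,\pi(dy)\,dr$; combining the latter with the $-\int_0^t\int_{|y|<1}H^j\,\partial_jW(Y(r^-))\,\pi(dy)\,dr$ piece concealed inside $\int_0^t\partial_jW(Y_c(r))\,dY_c^j(r)$ produces exactly the last integrand $W(Y(r^-)+H)-W(Y(r^-))-H^i\,\partial_iW(Y(r^-))$, as stated. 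The only delicate point is the interchange of the summation with the limits: with $\pi$ finite this is immediate once one checks that the series of jump terms converges absolutely on $[0,t\wedge\tau_m]$, which follows from the boundedness of $W$ and $\nabla W$ on the ball of radius $m$ together with the standing hypotheses in item (c) on the coefficients $H_j$; the genuinely hard case, an infinite-activity L\'evy measure, would instead require a second-order Taylor estimate on the small jumps controlled by $\int_{|y|<1}|H|^2\,\pi(dy)<\infty$, but that refinement is not needed here. Alternatively one may simply invoke the It\^o formula for L\'evy-driven SDEs from \cite{applebaum2009levy}.
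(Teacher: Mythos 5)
The paper does not actually prove this lemma itself --- it defers entirely to Applebaum \cite{applebaum2009levy} --- and your interlacing argument (localize so that $W$ and its derivatives are bounded, apply the continuous It\^o formula between the finitely many jump times guaranteed by $\pi(\mathbb{Y})<\infty$, add the pure-jump increments of $W\circ Y$, then split $N=\bar N+\pi\,dt$ to reach the compensated form) is precisely the standard proof given in that reference, so your proposal is a correct outline of essentially the same route. One minor remark: you need not invoke the existence theorem for the COVID-19 model to conclude $\tau_m\uparrow\infty$ (which would be slightly circular, since the It\^o formula is itself used there); $Y$ is assumed to be a solution, hence a c\`adl\`ag semimartingale, and is therefore bounded on compact time intervals, so the localization is automatic.
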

where $Y_c$ is the continuous part of $Y$ given by $Y^i_c(t)=\int_0^tF^i_{k}(s)dB^k(s)+\int ^t_{0}G^i(s)ds$,\quad $1\leq i\leq n,\,1\geq k\leq m, \, t\geq 0.$
The proof of this lemma is given in \cite[P. 226]{applebaum2009levy}.

Next, let us denote $LW:[0,\infty)\times \mathbb{R}^n\rightarrow \mathbb{R}$ as the linear function associated to the SDE (\ref{SDE-Ito}) which is given by
\begin{align*}
(LW)(\eta)&= G^{i}(\eta)(\partial_{i}W)(\eta(0))+\frac{1}{2}[F(\eta)(F(\eta)^T]^{ik}(\partial_{i}\partial_{k}W)(\eta(0))\nonumber\\   &+\int_{|y|<1}[W(\eta(0)+H(\eta,y))-W(\eta(0))-H^{i}(\eta,y)(\partial_{i}W)(\eta(0))]\pi(dy),
\end{align*}
where $\eta\in[0,\infty)\times \mathbb{R}^n .$

\begin{lemma}\label{lemma-1}
Assume $(c)$ holds. The stochastic model (\ref{CV-BM-LM}) has a unique non-negative solution $(S_t,I_t,R_t)\in \mathbb{R}_+^3$ for any given initial value $(S_0,I_0,R_0)\in \mathbb{R}^3_+$ on time $t\geq 0$ almost surely (a.s.). Under $(g)$, the solution of model (\ref{CV-BM-LM}) satisfies the following conditions:\\

(i.) $lim_{t \rightarrow \infty} \left(\frac{S_t+I_t+R_t}{t}\right)=0$ a.s.\\
Moreover, $lim_{t \rightarrow \infty} \left(\frac{S_t}{t}\right)=0, \,\, lim_{t \rightarrow \infty} \left(\frac{I_t}{t}\right)=0, \,\,lim_{t \rightarrow \infty} \left(\frac{R_t}{t}\right)=0,$\\
(ii.) $lim_{t \rightarrow \infty}\frac{S_tdB^1_t}{t}=0$,\,\,  $lim_{t \rightarrow \infty}\frac{I_tdB^2_t}{t}=0$, \,\, $lim_{t \rightarrow \infty}\frac{R_tdB^3_t}{t}=0$,
 $lim_{t \rightarrow \infty}\frac{\int_0^t \int_{\mathbb{Y}}S_r\,\epsilon_{1}(y)\,\bar{N}(dr,dy)}{t}=0$,\\

 $lim_{t \rightarrow \infty}\frac{\int_0^t \int_{\mathbb{Y}}I_r\,\epsilon_{2}(y)\,\bar{N}(dr,dy)}{t}=0$,\,\,$lim_{t \rightarrow \infty}\frac{\int_0^t \int_{\mathbb{Y}}R_r\,\epsilon_{3}(y)\,\bar{N}(dr,dy)}{t}=0.$ \quad a.s.
\end{lemma}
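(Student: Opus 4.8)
The plan is to prove the three assertions in order; the first one, global existence–uniqueness with positivity, is the delicate part.

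\textbf{Existence, uniqueness and positivity.} The drift of (\ref{CV-BM-LM}) is quadratic in $(S,I,R)$ while the diffusion and jump coefficients are linear in the state, so together with assumption $(c)$(i) all coefficients are locally Lipschitz; hence for every $(S_0,I_0,R_0)\in\mathbb{R}_+^3$ there is a unique maximal local solution on $[0,\tau_e)$, where $\tau_e$ is the explosion time. To obtain $\tau_e=\infty$ a.s. together with $(S_t,I_t,R_t)\in\mathbb{R}_+^3$, I use the Lyapunov function
\[
V(S,I,R)=(S-1-\ln S)+(I-1-\ln I)+(R-1-\ln R),
\]
which is nonnegative by Remark \ref{rem-1} and blows up whenever a coordinate tends to $0$ or to $\infty$. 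Applying the operator $L$ introduced after Lemma \ref{Ito-form} to $V$ along (\ref{CV-BM-LM}), the bilinear terms $\pm\beta S I$ cancel, the diffusion parts contribute $\tfrac{1}{2}\sum_j\lambda_j^2$, and the L\'evy parts collapse to $\sum_j\int_{\mathbb{Y}}\big(\epsilon_j(y)-\ln(1+\epsilon_j(y))\big)\pi(dy)=\sum_j\big(\varphi_j-\tfrac{\lambda_j^2}{2}\big)$ by the definition of $\varphi_j$ in $(e)$; discarding the manifestly negative terms ($-\nu S$, $-\nu R$, $-\Lambda/S$, $-\sigma R/S$, $-\beta S$, $-\gamma I/R$) leaves $LV\le K_1+(\beta-\nu)^+ I\le K_1'+K_2V$, the last estimate using $\ln I\le\tfrac{1}{2}I+\mathrm{const}$, hence $I\le 2V+\mathrm{const}$. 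With $\tau_n=\inf\{t\ge 0:\,S_t\notin(\tfrac1n,n)\text{ or }I_t\notin(\tfrac1n,n)\text{ or }R_t\notin(\tfrac1n,n)\}$, It\^o's formula, the bound on $LV$ and Gronwall's inequality give $\mathbb{E}\,V(S_{\tau_n\wedge T},I_{\tau_n\wedge T},R_{\tau_n\wedge T})\le\big(V(S_0,I_0,R_0)+K_1'T\big)e^{K_2T}$ for every $T>0$; since $V$ is at least $\min(n-1-\ln n,\tfrac1n-1+\ln n)\to\infty$ on $\{\tau_n\le T\}$, the standard contradiction argument forces $\mathbb{P}(\lim_n\tau_n\le T)=0$, so $\tau_e=\infty$ a.s.\ and the solution stays in $\mathbb{R}_+^3$.

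\textbf{A moment bound and part (i).} Summing the equations of (\ref{CV-BM-LM}), $N_t:=S_t+I_t+R_t$ solves $dN_t=(\Lambda-\nu N_t)\,dt+dM_t$, where $M_t=\int_0^t(\lambda_1S\,dB^1+\lambda_2I\,dB^2+\lambda_3R\,dB^3)+\int_0^t\int_{\mathbb{Y}}(\epsilon_1S_{r-}+\epsilon_2I_{r-}+\epsilon_3R_{r-})\,\bar{N}(dr,dy)$ is a martingale. Applying It\^o to $N_t^m$ for the $m>2$ of assumption $(g)$, with $S^2+I^2+R^2\le N^2$ and $\hat\epsilon N\le\epsilon_1S+\epsilon_2I+\epsilon_3R\le\tilde\epsilon N$ used to bound the diffusion and compensator contributions by constant multiples of $N^m$, one gets (after a localization and taking expectations) $\tfrac{d}{dt}\mathbb{E}N_t^m\le m\Lambda\,\mathbb{E}N_t^{m-1}-mM\,\mathbb{E}N_t^m$ with the constant $M>0$ of $(g)$; Young's inequality then gives $\sup_{t\ge 0}\mathbb{E}N_t^m<\infty$. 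For the almost sure statement, the representation $N_t=N_0e^{-\nu t}+\tfrac{\Lambda}{\nu}(1-e^{-\nu t})+e^{-\nu t}\int_0^te^{\nu s}\,dM_s$, together with $M_t/t\to 0$ a.s.\ (established below), yields by a short Kronecker/Ces\`aro computation first $\tfrac1t\int_0^tN_s\,ds\to\Lambda/\nu$ and then $N_t/t\to 0$ a.s.; since $0\le S_t,I_t,R_t\le N_t$, all limits in (i) follow.

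\textbf{The martingale limits, part (ii).} Each of the six processes in (ii) is a local martingale $P_t$ whose predictable quadratic variation is dominated by $C\int_0^tN_s^2\,ds$: for the Brownian terms $\langle\int_0^\cdot S\,dB^1\rangle_t=\int_0^tS_s^2\,ds\le\int_0^tN_s^2\,ds$, and for the jump terms $\langle\int_0^\cdot\int_{\mathbb{Y}}\epsilon_1S_{r-}\bar{N}(dr,dy)\rangle_t=\big(\int_{\mathbb{Y}}\epsilon_1(y)^2\pi(dy)\big)\int_0^tS_s^2\,ds$ with $\int_{\mathbb{Y}}\epsilon_1(y)^2\pi(dy)<\infty$ (the $\epsilon_j$ are bounded and $\pi(\mathbb{Y})<\infty$). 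From $\sup_t\mathbb{E}N_t^2<\infty$ (a consequence of the $m>2$ moment bound via Jensen) one obtains $\mathbb{E}\int_0^\infty\frac{N_s^2}{(1+s)^2}\,ds<\infty$, hence $\int_0^\infty\frac{d\langle P\rangle_s}{(1+s)^2}<\infty$ a.s., and the strong law of large numbers for local martingales (Kronecker's lemma) gives $P_t/t\to 0$ a.s.; alternatively one applies the Burkholder--Davis--Gundy inequality on the intervals $[n,n+1]$ and the Borel--Cantelli lemma, which is where $m>2$ is used. In particular $M_t/t\to 0$ a.s., as was needed in the previous step.

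\textbf{Where the work is.} Everything becomes routine once two things are settled: the drift estimate $LV\le K_1'+K_2V$, which hinges on the cancellation of the $\beta S I$ terms, on recognizing that the L\'evy contributions reduce to the constants $\varphi_j$ of $(e)$, and on absorbing the residual $(\beta-\nu)^+I$ into $K_2V$; and the bootstrap $\tfrac1t\int_0^tN_s\,ds\to\Lambda/\nu\ \Rightarrow\ N_t/t\to 0$ in part (i). I expect the first of these to be the main obstacle.
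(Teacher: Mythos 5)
Your argument is correct and is essentially the proof the paper intends: the paper omits it entirely, deferring to \cite{zhou2016threshold}, and your three steps (the Lyapunov function $\sum(x-1-\ln x)$ with the $\beta SI$ cancellation for global positivity, the $N^m$ moment bound under hypothesis $(g)$, and the strong law of large numbers for local martingales to get (i) and (ii)) are precisely the standard argument carried out in that reference. The only point worth flagging is that you correctly notice, and handle via $I\le 2V+\mathrm{const}$, the residual $(\beta-\nu)^{+}I$ term that many SIR-type Lyapunov computations gloss over because there the total population is deterministically bounded, which is not the case for model (\ref{CV-BM-LM}).
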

\begin{proof}
The proof of this lemma is similar to \cite{zhou2016threshold} and hence is omitted.
\end{proof}
\section{Dynamical analysis of deterministic COVID-19 model}\label{determ}

The deterministic version of systems (\ref{CV-BM}) and (\ref{CV-BM-LM}) is

\begin{align}\label{CV-Det}
 &\frac{dS_t}{dt}= \Lambda-\beta\,S_t\,I_t-\nu\,S_t+\sigma \,R_t,\nonumber\\
 &\frac{dI_t}{dt}=\beta\,S_t\,I_t-(\nu+\gamma)\,I_t,\nonumber\\
 &\frac{dR_t}{dt}t=\gamma\,I_t-(\nu+\sigma)\,R_t,
\end{align}
and
\begin{align}\label{CD_X}
  \frac{dX}{dt}=\frac{dS_t}{dt}+\frac{dI_t}{dt}+\frac{dR_t}{dt}= \Lambda-\nu\,X,
\end{align}
where $X=S_t+I_t+R_t$. For $\Lambda=\nu\,X,$ Equation (\ref{CD_X}) shows $X$ is the total constant population with initial value $X_0=S_0+I_0+R_0$. This equation has analytical solution
\begin{align}\label{CD_X_sol}
   X=\frac{\Lambda}{\nu}+X_0\,e^{-\nu\,t}.
\end{align}
Since, the initial values are non-negative, we have $S_t\geq 0, \, I_t\geq 0,\, R_t\geq 0,$ and $lim_{t\rightarrow \infty}X=\frac{\Lambda}{\nu}.$ One can easily conclude that   $0 < X \leq \frac{\Lambda}{\nu}.$ Therefore, Eq. (\ref{CD_X_sol}) has a positivity property. Thus the deterministic COVID-19 model (\ref{CV-Det}) is biologically meaningful and bounded in the domain
$$\mathbb{D}=\left\{(S_t,I_t,R_t)\in \mathbb{R}^{3}_{+}: 0 < X \leq \frac{\Lambda}{\nu}\right\}{.}$$

The equilibrium point of system (\ref{CV-Det}) satisfies the following:
\begin{align*}
  & \Lambda-\beta\,S_t\,I_t-\nu\,S_t+\sigma \,R_t=0,\nonumber\\
 &\beta\,S_t\,I_t-(\nu+\gamma)\,I_t=0,\nonumber\\
 &\gamma\,I_t-(\nu+\sigma)\,R_t=0,
\end{align*}
having the equilibria:
\begin{align*}
    &E^0=(S^0,I^0,R^0)=\left(\frac{\Lambda}{\nu},0,0\right){,}\nonumber\\
    &E^1=(S^1,I^1,R^1)=\left(\frac{\nu+\gamma}{\beta},\,\frac{\beta\,\Lambda-\nu\,(\nu+\gamma)}{\nu+\gamma},0\right){,}\nonumber\\
    & E^2=(S^2,I^2,R^2)=\left(\frac{\nu+\gamma}{\beta},\,0,\frac{\nu\,(\nu+\gamma)-\beta\,\Lambda}{\beta\,\sigma}\right){,}\nonumber\\
    & E^3=(S^3,I^3,R^3)=\left(\frac{\nu+\gamma}{\beta},\,\frac{(\Lambda-\nu\,S^3)(\nu+\sigma)}{\beta\,S^3(\nu+\sigma)-\gamma\,\sigma},\,\frac{\gamma\,(\Lambda-\nu\,S^3)}{\beta\,S^3(\nu+\sigma)-\gamma\,\sigma} \right){,}
\end{align*}
where $S^3=\frac{\nu+\gamma}{\beta}$.\\
$E^0$ is called disease-free equilibrium point (free virus equilibrium point). Because there are no infectious individuals in the population, which indicates that $I = 0$ and $R = 0$. $E^{3}$ is known as endemic equilibrium point (the positive virus point ) of the model (\ref{CV-Det}).

From the expressions of $I^1$ and $I^3$, noting that if
$$\frac{\Lambda}{\nu}>\frac{\nu+\gamma}{\beta}{,}$$
the deterministic system (\ref{CV-Det}) has unique positive equilibrium $E^1$ and $E^3$. From this the reproductive number of the system (\ref{CV-Det}) is given by
\begin{align*}
    \xi_0= \frac{\beta\,\Lambda}{(\nu+\gamma)\,\nu}{.}
\end{align*}

Similarly, at equilibrium point $E^3$, all the eigenvalues are non-positive if $\xi_0 > 1$. Hence the proposed model
is globally stable if  $\xi_0 > 1$.
\begin{theorem}\label{Det-thrm}
The deterministic system (\ref{CV-Det}) has\\
(i) a unique stable ‘disease-extinction’ (disease-free equilibrium) equilibrium point $E^j$ for $j=0,1,2,3$ if $\xi_0 <1$. This
indicates the extinction of the disease from the population..\\
(ii) a stable positive equilibrium $E^j$ for $j=0,1,2,3$ exists if $\xi_0 >1$ that shows the permanence of the
disease.
\end{theorem}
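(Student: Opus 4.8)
\section*{Proof proposal}

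The plan is to establish the local asymptotic stability of the relevant equilibria by linearization together with the Routh--Hurwitz criterion, treating the regimes $\xi_0<1$ and $\xi_0>1$ separately, and then to indicate how the disease-free case can be upgraded to a global statement inside the invariant region $\mathbb{D}$. First I would write down the Jacobian of the vector field in \eqref{CV-Det},
\[
J(S,I,R)=\begin{pmatrix} -\beta I-\nu & -\beta S & \sigma\\[2pt] \beta I & \beta S-(\nu+\gamma) & 0\\[2pt] 0 & \gamma & -(\nu+\sigma)\end{pmatrix}.
\]

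For part (i), evaluating at $E^0=(\Lambda/\nu,0,0)$ gives a matrix whose first column is $(-\nu,0,0)^\top$, so its spectrum reduces to that of the lower--right $2\times2$ block, which is triangular; hence the eigenvalues are $-\nu$, $-(\nu+\sigma)$ and $\beta\Lambda/\nu-(\nu+\gamma)=(\nu+\gamma)(\xi_0-1)$. All three have negative real part precisely when $\xi_0<1$, which gives local asymptotic stability of $E^0$; when $\xi_0>1$ the third eigenvalue is positive, so $E^0$ is a saddle and the infection cannot be eliminated. To obtain global extinction when $\xi_0<1$, I would use the bound $X(t)=S_t+I_t+R_t\le\Lambda/\nu$ (asymptotically, from \eqref{CD_X_sol}) so that $\dot I_t\le(\beta(\Lambda/\nu+\varepsilon)-(\nu+\gamma))I_t$, which for small $\varepsilon$ is strictly negative under $\xi_0<1$; this forces $I_t\to0$ exponentially, then $\dot R_t\to-(\nu+\sigma)R_t$ gives $R_t\to0$, and finally $\dot S_t\to\Lambda-\nu S_t$ gives $S_t\to\Lambda/\nu$, i.e.\ the trajectory converges to $E^0$.

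For part (ii), I would first note that $E^3$ lies in $\mathbb{R}^3_+$ iff $\Lambda-\nu S^3>0$, i.e.\ $\xi_0>1$, so existence of the endemic equilibrium is immediate in this regime. Evaluating $J$ at $E^3$ and using $\beta S^3=\nu+\gamma$ (so the $(2,2)$ entry vanishes and $\beta S^3(\nu+\sigma)-\gamma\sigma=\nu(\nu+\gamma+\sigma)$) reduces the characteristic polynomial to a cubic $z^3+a_1z^2+a_2z+a_3=0$ with $a_1=\beta I^3+\nu+(\nu+\sigma)$, $a_2=\beta I^3(\nu+\gamma)+(\beta I^3+\nu)(\nu+\sigma)$ and $a_3=\beta I^3\,\nu(\nu+\gamma+\sigma)$. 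When $\xi_0>1$ we have $I^3>0$, so all three coefficients are positive, and it remains only to check $a_1a_2-a_3>0$; expanding, every term of $a_1a_2-a_3$ is a positive monomial in $\beta I^3,\nu,\nu+\sigma,\nu+\gamma$, so the inequality holds and $E^3$ is locally asymptotically stable.

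The main obstacle is the global claim in (ii): Routh--Hurwitz gives only a neighbourhood of $E^3$. If genuine global asymptotic stability is intended, I would construct a Volterra-type Lyapunov function $V=(S-S^3\ln S)+(I-I^3\ln I)+c\,(R-R^3\ln R)$, possibly with an added quadratic correction $\tfrac{a}{2}(S-S^3+I-I^3+R-R^3)^2$ to absorb the $\sigma R$ cross-term in the $S$-equation, choose $c>0$ so that $\dot V\le0$ on $\mathbb{D}$ using the equilibrium relations $\Lambda=\beta S^3I^3+\nu S^3-\sigma R^3$, $\beta S^3=\nu+\gamma$, $\gamma I^3=(\nu+\sigma)R^3$, and conclude via LaSalle's invariance principle. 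Controlling the sign of $\dot V$ — in particular the coupled cross terms together with the inflow $\Lambda$ — is the delicate computational step; everything else is routine.
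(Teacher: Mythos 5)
Your proposal is correct and its core is the same as the paper's: linearize and read off eigenvalues. The difference is one of completeness. The paper computes the Jacobian only at $E^0$, notes the characteristic polynomial factors into linear terms, isolates the single potentially positive eigenvalue $\bar{\lambda}=\beta\Lambda/\nu-(\nu+\gamma)$, and concludes stability iff $\xi_0<1$; the cases $j=1,2,3$ are dismissed with ``similarly can show'' and no computation. You actually carry out the endemic case: using $\beta S^3=\nu+\gamma$ to kill the $(2,2)$ entry, your cubic coefficients $a_1,a_2,a_3$ are correct (I verified $a_3=\beta I^3\,\nu(\nu+\gamma+\sigma)$ and the Routh--Hurwitz combination $a_1a_2-a_3=(\beta I^3+\nu)^2(\nu+\sigma)+(\nu+\gamma)\beta I^3(\beta I^3+\nu)+(\nu+\sigma)^2(\beta I^3+\nu)+\beta I^3\gamma\sigma>0$ --- note the last term arises from the cancellation $(\nu+\gamma)(\nu+\sigma)-\nu(\nu+\gamma+\sigma)=\gamma\sigma$, so your claim that every surviving monomial is positive does hold, but it is worth displaying that cancellation explicitly rather than asserting it). Your additions --- the comparison argument $\dot I_t\le(\beta(\Lambda/\nu+\varepsilon)-(\nu+\gamma))I_t$ upgrading (i) to global extinction, and the observation that $E^3\in\mathbb{R}^3_+$ iff $\xi_0>1$ (which is really the content of ``existence'' in part (ii)) --- are not in the paper at all, and they buy you statements that the theorem's wording actually asserts (``extinction from the population'', ``permanence'') but the paper's local eigenvalue computation alone does not deliver. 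Your honest flag that global stability of $E^3$ would require a Lyapunov/LaSalle argument is well placed; the paper simply asserts global stability for $\xi_0>1$ without proof. In short: same method, but your version proves more of what the theorem claims than the paper's own proof does.
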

\begin{proof}
The Jacobian matrix of the system (\ref{CV-Det}) is

$$J=\left(
  \begin{array}{ccc}
    -\beta\,I-\nu & -\beta\,S & \sigma \\
    \beta\,I & \beta\,S-(\nu+\gamma) & 0 \\
   0 & \gamma & -(\nu+\sigma) \\
  \end{array}
\right){.}$$
Now let us show for $j=0$ ($E^0$), then similarly can show for $j=1,2,3$.

The Jacobian of the system (\ref{CV-Det}) at $E^0$ obtains\\

$$J^0=\left(
  \begin{array}{ccc}
    -\nu & -\beta\,\frac{\Lambda}{\nu} & \sigma \\
   0 & \beta\,\frac{\Lambda}{\nu}-(\nu+\gamma) & 0 \\
   0 & \gamma & -(\nu+\sigma) \\
  \end{array}
\right){.}$$

The eigenvalues are calculated as follows:\\
\begin{align}\label{E-0}
  J^{E^0}=\left|
  \begin{array}{ccc}
    -\nu-\bar{\lambda} & -\beta\,\frac{\Lambda}{\nu} & \sigma \\
   0 & \beta\,\frac{\Lambda}{\nu}-(\nu+\gamma)-\bar{\lambda} & 0 \\
   0 & \gamma & -(\nu+\sigma) -\bar{\lambda}\\
  \end{array}
\right| {.}
\end{align}
The characteristic polynomial of equation (\ref{E-0}) is

$$( -\nu-\bar{\lambda})(\beta\,\frac{\Lambda}{\nu}-(\nu+\gamma)-\bar{\lambda})(-(\nu+\sigma) -\bar{\lambda})=0,$$
so the eigenvalue is

$$\bar{\lambda}=\beta\,\frac{\Lambda}{\nu}-(\nu+\gamma){.}$$

From the stability theory, $E^0$ is stable if and only if\\

$\bar{\lambda}<0,$\\

or equivalently\\

$$\beta\,\frac{\Lambda}{\nu}-(\nu+\gamma)<0,$$ implies $$\xi_0=\beta\,\frac{\Lambda}{\nu\,(\nu+\gamma)}<1.\, \quad \Box$$
\end{proof}

\begin{figure}[htb!]
\centering
  \subfloat[Solution of $\frac{dI_t}{dt}$ ]{\includegraphics[width=0.5\textwidth]{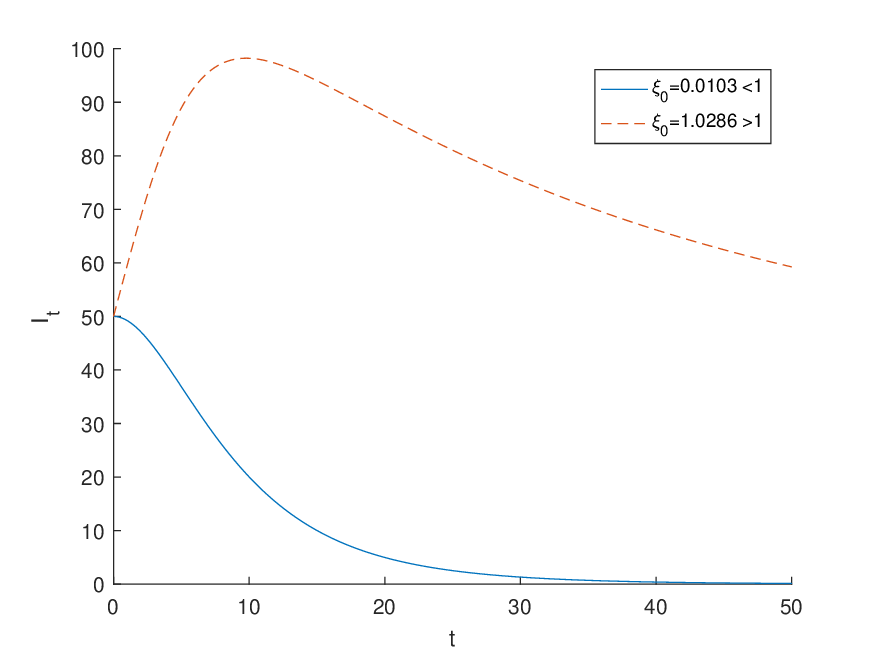}}
\subfloat[$\frac{dI_t}{dt}$ versus $\nu$ .]{\includegraphics[width=0.5\textwidth]{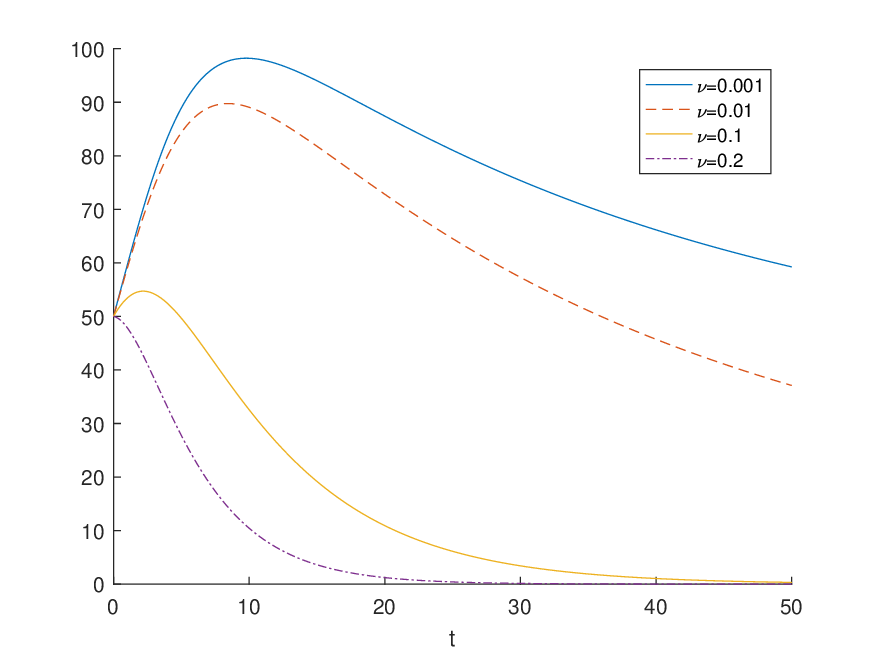}}
       \hfill
       \subfloat[ $\xi_0= 0.0720$]{\includegraphics[width=0.5\textwidth]{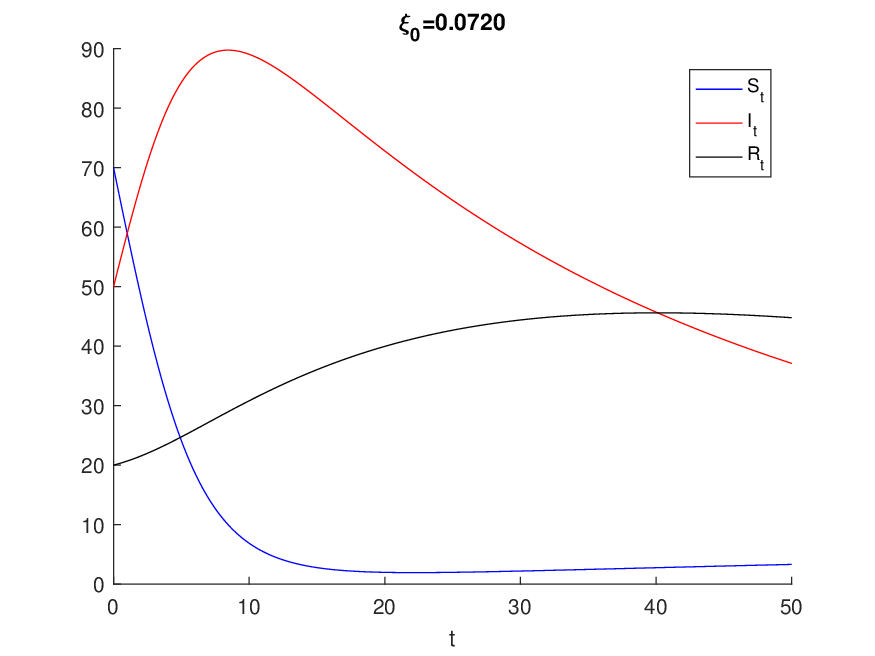}}
 \caption{\label{I_t-Deter} Sample path of $\frac{dI}{dt}$ (a) when $\xi_0=1.0286$ and  $\xi_0=0.0103$. (b) The phaseline of $dI_t/dt$ at different values of $\nu$. (c) When the reproduction number $\xi_0 <1.$}
  \end{figure}

\section{ Dynamics of the stochastic COVID-19 system}\label{Sec3}
\subsection{Existence and uniqueness of the solution}\label{s3}
To study the dynamical behaviour of dynamic biological system,  the main concern is to check whether the solution of the system is unique global and positive. A dynamical system has a  {uniquely}  global solution, if it exhibits no explosion in a given finite time. To have a {uniquely}  global solution, the coefficients of the system must satisfy the following two conditions: (i) local Lipschitz condition, (ii) linear growth condition; see \cite{applebaum2009levy,duan2015introduction}. However, the coefficients of { the} stochastic COVID-19 model (\ref{CV-BM-LM}) do not satisfy the second condition (linear growth condition), so the solution $(S_t,I_t,R_t)$ of system (\ref{CV-BM-LM}) can explode in a finite time $t$. The following Theorem helps us to show that there exists a unique positive solution $(S_t,I_t,R_t)\in \mathbb{R}^3_+$ to COVID-19 system (\ref{CV-BM-LM}).
\begin{theorem}\label{Ext.Soln}
For any given initial condition $(S_0,I_0,R_0)\in \mathbb{R}^3_+$, there is a unique non-negative solution $(S_t,I_t,R_t)\in \mathbb{R}_+^3$ of the model (\ref{CV-BM-LM}) for time $t\geq 0$.
\end{theorem}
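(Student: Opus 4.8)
The plan is to use the localization-plus-Lyapunov-function technique that is standard for jump-diffusion population models (the same method underlying Lemma \ref{lemma-1} and \cite{zhou2016threshold}). Since the drift coefficients of (\ref{CV-BM-LM}) are polynomial, hence locally Lipschitz, and the jump coefficients $H_j(x,y)=\epsilon_j(y)\,x$ satisfy the local Lipschitz bound in condition (c)(i), the classical existence theorem for SDEs with jumps gives a unique local solution $(S_t,I_t,R_t)$ on a maximal interval $[0,\tau_e)$, where $\tau_e$ is the explosion time. Everything then reduces to proving $\tau_e=\infty$ a.s. Fix an integer $n_0$ large enough that $S_0,I_0,R_0\in(1/n_0,n_0)$, and for $n\ge n_0$ set $\tau_n=\inf\{t\in[0,\tau_e):\min\{S_t,I_t,R_t\}\le 1/n \text{ or } \max\{S_t,I_t,R_t\}\ge n\}$; then $\tau_n$ is nondecreasing, $\tau_\infty:=\lim_{n\to\infty}\tau_n\le\tau_e$, and it suffices to show $\tau_\infty=\infty$ a.s.

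The heart of the argument is the $C^2$ function $V:\mathbb{R}^3_+\to\mathbb{R}_+$ defined by
\[
V(S,I,R)=\Big(S-a-a\ln\tfrac{S}{a}\Big)+\big(I-1-\ln I\big)+\big(R-1-\ln R\big),\qquad a:=\tfrac{\nu}{\beta},
\]
which is nonnegative by Remark \ref{rem-1} applied to $S/a$, $I$, and $R$. Applying the It\^o formula of Lemma \ref{Ito-form} with the operator $L$ displayed after it, one evaluates $LV$: the bilinear terms $\mp\beta S I$ contributed by the $S$- and $I$-equations cancel; the residual linear-in-$I$ term equals $(a\beta-\nu)I$, which vanishes by the choice $a=\nu/\beta$; the Brownian parts of the logarithmic terms produce $\tfrac{a\lambda_1^2}{2}+\tfrac{\lambda_2^2}{2}+\tfrac{\lambda_3^2}{2}$; and the compensated-jump parts of the logarithmic terms produce $a\!\int_{\mathbb{Y}}(\epsilon_1-\ln(1+\epsilon_1))\pi(dy)+\int_{\mathbb{Y}}(\epsilon_2-\ln(1+\epsilon_2))\pi(dy)+\int_{\mathbb{Y}}(\epsilon_3-\ln(1+\epsilon_3))\pi(dy)$ (the linear terms give no jump contribution, since $(1+\epsilon_j)x-x-\epsilon_j x=0$), all of which are finite by (c)(ii) and combine with the Brownian parts into $a\varphi_1+\varphi_2+\varphi_3$ by definition (e). Discarding the manifestly nonpositive terms $-\nu S,\ -\nu R,\ -a\Lambda/S,\ -a\sigma R/S,\ -\beta S,\ -\gamma I/R$, one obtains
\[
LV(S,I,R)\ \le\ \Lambda+a\nu+(\nu+\gamma)+(\nu+\sigma)+a\varphi_1+\varphi_2+\varphi_3\ =:\ K\ <\ \infty,
\]
a constant independent of $(S,I,R)$.

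The proof is then completed by the usual contradiction. Since on $[0,\tau_n]$ the solution is bounded (its coordinates lie in $[1/n,n]$), the $dB^j$- and $\bar N$-integrals appearing in It\^o's formula for $V$ are true martingales up to $\tau_n$, so
\[
\mathbb{E}\,V\big(S_{t\wedge\tau_n},I_{t\wedge\tau_n},R_{t\wedge\tau_n}\big)\ \le\ V(S_0,I_0,R_0)+K\,\mathbb{E}(t\wedge\tau_n)\ \le\ V(S_0,I_0,R_0)+KT
\]
for all $t\in[0,T]$ and every $T>0$. If $\mathbb{P}(\tau_\infty\le T)>0$ for some $T$, then $\mathbb{P}(\tau_n\le T)\ge\varepsilon$ for all $n$ and some $\varepsilon>0$; on $\{\tau_n\le T\}$ at least one coordinate of the stopped solution equals $n$ or $1/n$, whence $V$ there is at least $\delta_n$, the minimum of $V$ over the boundary of $[1/n,n]^3$, which tends to $\infty$ as $n\to\infty$. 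Letting $n\to\infty$ in $V(S_0,I_0,R_0)+KT\ge\varepsilon\,\delta_n$ gives a contradiction, so $\tau_\infty=\infty$ a.s.; hence the solution is global, and since it cannot reach the boundary of $\mathbb{R}^3_+$ before $\tau_\infty$, it remains in $\mathbb{R}^3_+$. The only real difficulty is the bookkeeping in the computation of $LV$ --- in particular confirming the cancellation of the bilinear term, the role of $a=\nu/\beta$ in killing the leftover $I$-term, and the reduction of each jump integral to the finite constant $\varphi_j$ --- once $LV\le K$ is established, the rest of the argument is entirely routine.
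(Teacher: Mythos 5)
Your proof is correct and follows essentially the same route as the paper: a unique local solution from local Lipschitz coefficients, localization by the stopping times $\tau_n$, the Lyapunov function $V=(S-a-a\ln(S/a))+(I-1-\ln I)+(R-1-\ln R)$, a constant upper bound on $LV$, and the standard martingale/contradiction argument. One small but worthwhile difference: you take $a=\nu/\beta$, which is exactly what annihilates the residual linear term $(a\beta-\nu)I$, whereas the paper takes $\alpha=(\nu+\gamma)/\beta$ and silently drops a leftover $+\gamma I$ in its bound on $LW$; your bookkeeping (including the exact jump contribution $\int_{\mathbb{Y}}(\epsilon_j-\ln(1+\epsilon_j))\,\pi(dy)$ and the fully written-out final contradiction, which the paper delegates to the references) is the more careful of the two.
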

\begin{proof}
The differential equation (\ref{CV-BM-LM}) has a locally Lipschitz continuous coefficient, so the model has a unique local solution $(S_t,I_t,R_t)$ on $t\in [0,t_{e})$ where $t_{e}$ is the time for noise for { the} explosion. In order to have a global solution, we need to show that $t_{e}=\infty$ almost surely. To do this, assume that $k_0$ is very large positive number $(k_0>0)$ so that the initial condition $(S_0,I_0,R_0)\in\left[\frac{1}{k_0},k_0\right]$. For every integer $k\geq k_0$, the stopping time is defined as:
\begin{align*}
   \tau_e=inf\{t\in[0,t_{e}):min(S_t,I_t,R_t)\leq \frac{1}{k_0},\,\, or\,\,\, max(S_t,I_t,R_t)\geq k\}{.}
\end{align*}
As $k$ goes to $\infty$, $\tau_k$ increases. Define $lim_{k\rightarrow \infty}\,\tau_k=\tau_{\infty}$ with $\tau_{\infty} \leq \tau_e$. If we can prove that $\tau_{\infty}=\infty$ almost surely, then $\tau_e=\infty.$ If this is false, then there are two positive constants $T>0$ and $\delta\in (0,1)$ such that
\begin{align*}
\mathbb{P}\{\tau_\infty\leq T\} > \delta.
\end{align*}
Thus there is $k_1\geq k_0$ that satisfies
\begin{align*}
\mathbb{P}\{\tau_k\leq T\}\geq \delta, \quad k\geq k_1.
\end{align*}
Now , let us define a $C^2$-function $W$: $\mathbb{R}_+^3 \rightarrow \mathbb{R}_+$ by
\begin{align}\label{C^2-Fun}
W(S,I,R)=(S-\alpha-\alpha\,\frac{\ln S}{\alpha})+(I-1-\ln I)+ (R-1-\ln R){.}
\end{align}
Applying  It$\hat{o}$ formula in Lemma \ref{Ito-form} to Eq. (\ref{C^2-Fun}) yields,
\begin{align}\label{C^2-Fun1}
dW(S,I,R)&=(1-\alpha/S)dS+\frac{(dS)^2}{2\,S^2}+(1-1/I)dI+\frac{(dI)^2}{2\,I^2}+(1-1/S)dS+\frac{(dR)^2}{2\,R^2}\nonumber\\
&:=LW\,dt+ \bar{W},
\end{align}
where $L$ is a differential operator, \cite{applebaum2009levy}.

\begin{align*}
 \bar{W}&=\lambda_1\,SdB^1_t+\int_\mathbb{Y}\epsilon_1(y)S\bar{N}(dt,dy)-\alpha\,\lambda_1\,dB^1_t-\alpha\,\int_\mathbb{Y}\epsilon_1(y)\bar{N}(dt,dy)\nonumber\\
 &+\lambda_2\,IdB^2_t+\int_\mathbb{Y}\epsilon_2(y)I\,\bar{N}(dt,dy)-\lambda_2\,dB^2_t-\int_\mathbb{Y}\epsilon_2(y)\bar{N}(dt,dy)\nonumber\\
 &+\lambda_3\,RdB^3_t+\int_\mathbb{Y}\epsilon_3(y)R\,\bar{N}(dt,dy)-\lambda_3\,dB^3_t-\int_\mathbb{Y}\epsilon_3(y)\bar{N}(dt,dy),
\end{align*}
and

$LW:\mathbb{R}_+^3 \rightarrow \mathbb{R}_+$ is defined as
\begin{align*}
LW&=\Lambda-\nu\,S+\sigma\, R-\alpha\,\frac{\Lambda}{S}+\alpha\,\beta\,I+\alpha\,\nu-\alpha\,\frac{\sigma\,R}{S}+\frac{\lambda_1^2}{2}+\int_{\mathbb{Y}}\epsilon_1^2(y)\pi(dy)-(\nu+\gamma)\,I-\beta\,S+(\nu+\gamma)+ \frac{\lambda_2^2}{2}\nonumber\\
&+\int_{\mathbb{Y}}\epsilon_2^2(y)\pi(dy) +\gamma\,I-(\nu+\sigma)\,R-\gamma+(\nu+\sigma)
+\frac{\lambda_3^2}{2}+\int_{\mathbb{Y}}\epsilon_3^2(y)\pi(dy)\nonumber\\
&\leq \Lambda+\alpha\,\nu+(\alpha\,\beta\,-(\nu+\gamma))\,I+(\nu+\gamma) -\gamma+(\nu+\sigma)+\frac{\lambda_1^2}{2}+ \frac{\lambda_2^2}{2}+\frac{\lambda_3^2}{2}\nonumber\\
&+\int_{\mathbb{Y}}\epsilon_1^2(y)\pi(dy)+\int_{\mathbb{Y}}\epsilon_2^2(y)\pi(dy)
+\int_{\mathbb{Y}}\epsilon_3^2(y)\pi(dy).
\end{align*}
By plugging in  $\alpha=\frac{\nu+\gamma}{\beta}$, we get
\begin{align*}
  LW\leq &\Lambda+\alpha\,\nu+(\nu+\gamma) -\gamma+(\nu+\sigma)+\frac{\lambda_1^2}{2}+ \frac{\lambda_2^2}{2}+\frac{\lambda_3^2}{2}+\int_{\mathbb{Y}}\epsilon_1^2(y)\pi(dy)+\int_{\mathbb{Y}}\epsilon_2^2(y)\pi(dy)
+\int_{\mathbb{Y}}\epsilon_3^2(y)\pi(dy) \nonumber\\
:=& C.
\end{align*}
where the parameter C is a positive constant. The rest of the proof\, follows Cai et al. \cite[ Lemma 2.2]{Cai2017A}, and Zhu et al. \cite[Theorem 1]{2015A}.
\end{proof}

\subsection{Extinction and persistence of the disease}\label{s4}
Since this paper is considering the epidemic dynamic systems, we are focused in prevail and persist of the COVID-19 in a population.
\subsubsection{Extinction of the disease}
In this subsection, we give some conditions for { the} extinction of COVID-19 in the {stochastic COVID-19} system (\ref{CV-BM-LM}). Since {the} extinction of disease (epidemics) in small populations has the major challenges in
population dynamics \cite{2017Epidemic}. So it is important to study the extinction of COVID-19.\\

Define a parameter $\xi$ as
\begin{align*}
 \xi=\frac{\beta\,\Lambda}{\nu}\frac{1}{\gamma+\nu+\varphi_2},
\end{align*}
where $\varphi_2=\frac{1}{2} \lambda_2+\int_{\mathbb{Y}}[\epsilon_2(y)-\ln (1+\epsilon_2(y))]\pi(dy).$ Here $\xi$ is the basic reproduction number for stochastic COVID-19 model (\ref{CV-BM-LM}).
\begin{remark}
From $(e)$ and Remark \ref{rem-1}, we have
\begin{align*}
 \varphi_2&=  \frac{\lambda^2_{2}}{2}+\int_{\mathbb{Y}} [\epsilon_{2}(y)-\ln(1+\epsilon_{2}(y))]\,\pi(dy) \nonumber\\
 &=\frac{\lambda^2_{2}}{2}+\int_{\mathbb{Y}}[(1+\epsilon_{2}(y))-1-\ln(1+\epsilon_{2}(y))]\,\pi(dy) \nonumber\\
 &\geq \frac{\lambda^2_{2}}{2}.
\end{align*}
\end{remark}
\begin{definition}\label{def-1}
For the stochastic model (\ref{CV-BM-LM}) if $lim_{t\rightarrow\infty}I_t=0$, then the disease $I_t$ is said to be extinct,  a.s.\\
\end{definition}
\begin{theorem}\label{Exten-thr}
Assume that $(g)$ holds. Then for any initial condition $(S_0,I_0,R_0)\in \mathbb{R}^3_{+}$, the solution $(S_t,I_t,R_t) \in \mathbb{R}^3_{+}$  of the stochastic COVID-19 model (\ref{CV-BM-LM}) has the following properties:

\begin{align*}
 lim_{t\rightarrow \infty} sup \frac{\ln I_t}{t}\leq \beta\,\frac{\Lambda}{\nu}\left(1-\frac{1}{\xi}\right),\quad a.s.
\end{align*}
If $\xi< 1$ holds, then $I_t$ can go to zero with probability one.\\
Moreover,

$lim_{t\rightarrow \infty}<S>_t=\frac{\Lambda}{\nu}=S_0$,\,\, $lim_{t\rightarrow \infty}<R>_t=0,$ \quad a.s.
\end{theorem}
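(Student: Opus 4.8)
The plan is to apply the jump It\^o formula of Lemma \ref{Ito-form} to $\ln I_t$ and read off a sharp upper bound for the top Lyapunov exponent of $I_t$. From the second equation of (\ref{CV-BM-LM}), the It\^o correction and the Poisson compensator combine so that
\begin{align*}
d\ln I_t=\Big(\beta\,S_t-(\nu+\gamma)-\tfrac{\lambda_2^2}{2}-\int_{\mathbb{Y}}\big[\epsilon_2(y)-\ln(1+\epsilon_2(y))\big]\pi(dy)\Big)dt+\lambda_2\,dB^2_t+\int_{\mathbb{Y}}\ln(1+\epsilon_2(y))\,\bar{N}(dt,dy),
\end{align*}
and, recalling the definition of $\varphi_2$ in item (e), the drift collapses to $\beta S_t-(\nu+\gamma)-\varphi_2$. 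Integrating on $[0,t]$ and dividing by $t$ gives
\begin{align*}
\frac{\ln I_t}{t}=\frac{\ln I_0}{t}+\beta<S>_t-(\nu+\gamma)-\varphi_2+\frac{\lambda_2 B^2_t}{t}+\frac{\psi_2(t)}{t},
\end{align*}
with $\psi_2(t)=\int_0^t\int_{\mathbb{Y}}\ln(1+\epsilon_2(y))\,\bar{N}(dr,dy)$ the martingale appearing in item (f).

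Next I would bound $<S>_t$ from above. Summing the three equations of (\ref{CV-BM-LM}) and setting $X_t=S_t+I_t+R_t$ produces $dX_t=(\Lambda-\nu X_t)\,dt+dM_t$, where $M_t$ gathers the Brownian and compensated-Poisson integrals driven by $S_t,I_t,R_t$. Integrating, dividing by $t$, and invoking Lemma \ref{lemma-1} (which gives $X_t/t\to0$ and $M_t/t\to0$ a.s. under hypothesis $(g)$) forces $<X>_t\to\Lambda/\nu$ a.s.; since $0\le S_t\le X_t$, this yields $\limsup_{t\to\infty}<S>_t\le\Lambda/\nu$ a.s.

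Then I would eliminate the two martingale terms in the $\ln I_t$ identity: the continuous part $\lambda_2 B^2_t$ has quadratic variation $\lambda_2^2 t$, so $\lambda_2 B^2_t/t\to0$ a.s. by the strong law of large numbers for martingales; for $\psi_2(t)$, item (f) bounds its predictable quadratic variation by $Ct$, hence $\psi_2(t)/t\to0$ a.s. as well. Taking $\limsup_{t\to\infty}$ in the displayed identity and inserting $\limsup<S>_t\le\Lambda/\nu$ gives
\begin{align*}
\limsup_{t\to\infty}\frac{\ln I_t}{t}\le\frac{\beta\Lambda}{\nu}-(\nu+\gamma+\varphi_2).
\end{align*}
By the definition of $\xi$ one has $\nu+\gamma+\varphi_2=\dfrac{\beta\Lambda}{\nu\,\xi}$, so the right-hand side is exactly $\dfrac{\beta\Lambda}{\nu}\big(1-\tfrac1\xi\big)$, which is the first claim; when $\xi<1$ this exponent is strictly negative and therefore $I_t\to0$ a.s. (indeed exponentially fast), in the sense of Definition \ref{def-1}.

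Finally, the time-averages follow once $I_t\to0$. This gives $<I>_t\to0$, and integrating the $R_t$-equation, dividing by $t$, and using $R_t/t\to0$, the vanishing of the associated normalized martingale term (Lemma \ref{lemma-1}), and $<I>_t\to0$ forces $(\nu+\sigma)<R>_t\to0$, i.e. $<R>_t\to0$ a.s. Since $<X>_t=<S>_t+<I>_t+<R>_t\to\Lambda/\nu$, we conclude $<S>_t\to\Lambda/\nu$ a.s. (written $S_0$ in the statement, i.e. the disease-free level $\Lambda/\nu$). I expect the main obstacle to be the pair of martingale-limit arguments — in particular verifying that $\psi_2(t)$ meets the hypotheses of the martingale strong law via the bound in item (f) — together with making the upper estimate for $<S>_t$ rigorous uniformly in $t$; the remaining manipulations are routine given Lemma \ref{lemma-1}.
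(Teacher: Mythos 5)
Your argument is correct, and the It\^o computation for $\ln I_t$ (drift $\beta S_t-(\nu+\gamma)-\varphi_2$, martingale parts $\lambda_2 B^2_t$ and $\psi_2$) coincides with the paper's; you even fix a small typo, since the diffusion term in $d\ln I_t$ is $\lambda_2\,dB^2_t$, not $\lambda_2 I_t\,dB^2_t$ as printed in Eq.~(\ref{dln-I_t}). Where you genuinely diverge is in how $<S>_t$ is controlled. The paper integrates each of the three equations separately, multiplies the $R$-equation by $\sigma/(\nu+\sigma)$, and sums to obtain the \emph{exact} asymptotic identity $<S>_t=\frac{\Lambda}{\nu}-\bigl(\frac{\gamma+\nu+\sigma}{\nu+\sigma}\bigr)<I>_t+\bar{\Phi}_t$ with $\bar{\Phi}_t\to0$; substituting this into the $\ln I_t$ identity keeps a term $-\beta\bigl(\frac{\gamma+\nu}{\nu+\sigma}\bigr)<I>_t$, which is then simply dropped (being nonpositive) to get the extinction bound. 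You instead sum all three equations without weights, deduce $<X>_t\to\Lambda/\nu$ from Lemma~\ref{lemma-1}, and use the crude inequality $S_t\le X_t$ to get $\limsup_{t\to\infty}<S>_t\le\Lambda/\nu$, which is all the extinction statement needs. Your route is more economical for Theorem~\ref{Exten-thr} itself, and your derivation of $<S>_t\to\Lambda/\nu$ at the end (via $<X>_t\to\Lambda/\nu$ together with $<I>_t\to0$ and $<R>_t\to0$) is arguably cleaner than the paper's appeal to the deterministic relation (\ref{CD_X_sol}). The price is that the one-sided bound discards exactly the $<I>_t$-term that the paper's identity (\ref{<S_t>-1}) preserves, and that identity is what gets recycled verbatim to prove the persistence result in Theorem~\ref{per-mean}; so your shortcut proves this theorem but would have to be supplemented by the weighted-sum identity to handle the next one.
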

\begin{proof}
Integrating both sides of model (\ref{CV-BM-LM}) and dividing by $t$, gives

\begin{align}\label{S}
    \frac{S_t-S_0}{t}=\Lambda-\beta\,<S>_t\,<I>_t-\nu<S>_t+\sigma\,<R>_t+\frac{\lambda_1}{t}\int_0^tS_rdB^1_r+\frac{1}{t}\int_0^t\int_{\mathbb{Y}}\epsilon_1(y)\,S_r\,\bar{N}(dr,dt),
\end{align}

\begin{align}\label{I1}
    \frac{I_t-I_0}{t}=\beta\,<S>_t\,<I>_t-(\gamma+\nu)<I>_t+\frac{\lambda_2}{t}\int_0^tI_rdB^2_r+\frac{1}{t}\int_0^t\int_{\mathbb{Y}}\epsilon_2(y)\,I_r\,\bar{N}(dr,dt),
\end{align}

\begin{align}\label{R}
    \frac{R_t-R_0}{t}=\gamma\,<I>_t-(\nu+\sigma)<R>_t+\frac{\lambda_3}{t}\int_0^tR_rdB^3_r+\frac{1}{t}\int_0^t\int_{\mathbb{Y}}\epsilon_3(y)\,R_r\,\bar{N}(dr,dt).
\end{align}
Multiplying both side of Eq. (\ref{R}) by $\frac{\sigma}{\nu+\sigma}$, we have

\begin{align}\label{R1}
    \frac{\sigma}{\nu+\sigma}\frac{R_t-R_0}{t}=\frac{\sigma}{\nu+\sigma}\,\gamma\,<I>_t-\sigma<R>_t+\frac{\sigma}{\nu+\sigma}\frac{\lambda_3}{t}\int_0^tR_rdB^3_r+\frac{\sigma}{\nu+\sigma}\frac{1}{t}\int_0^t\int_{\mathbb{Y}}\epsilon_3(y)\,R_r\,\bar{N}(dr,dt).
\end{align}
Adding Eqs. (\ref{S}), (\ref{I1}), and (\ref{R1}), we obtain
\begin{align}\label{sum}
\frac{S_t-S_0}{t}+ \frac{I_t-I_0}{t}+   \frac{\sigma}{\nu+\sigma}\frac{R_t-R_0}{t}=&\Lambda-\nu<S>_t+\frac{\lambda_1}{t}\int_0^tS_rdB^1_r+\frac{1}{t}\int_0^t\int_{\mathbb{Y}}\epsilon_1(y)\,S_r\,\bar{N}(dr,dt) \nonumber\\
&-(\gamma+\nu)<I>_t+\frac{\lambda_2}{t}\int_0^tI_rdB^2_r+\frac{1}{t}\int_0^t\int_{\mathbb{Y}}\epsilon_2(y)\,I_r\,\bar{N}(dr,dt)\nonumber\\
&\frac{\sigma}{\nu+\sigma}\,\gamma\,<I>_t+\frac{\sigma}{\nu+\sigma}\frac{\lambda_3}{t}\int_0^tR_rdB^3_r+\frac{\sigma}{\nu+\sigma}\frac{1}{t}\int_0^t\int_{\mathbb{Y}}\epsilon_3(y)\,R_r\,\bar{N}(dr,dt)\nonumber\\
&=\Lambda-\nu<S>_t-\left((\gamma+\nu)-\frac{\sigma}{\nu+\sigma}\,\gamma\right)<I>\nonumber\\
&+\frac{\lambda_1}{t}\int_0^tS_rdB^1_r+\frac{1}{t}\int_0^t\int_{\mathbb{Y}}\epsilon_1(y)\,S_r\,\bar{N}(dr,dt)\nonumber\\
&+\frac{\lambda_2}{t}\int_0^tI_rdB^2_r+\frac{1}{t}\int_0^t\int_{\mathbb{Y}}\epsilon_2(y)\,I_r\,\bar{N}(dr,dt)\nonumber\\
&+\frac{\sigma}{\nu+\sigma}\frac{\lambda_3}{t}\int_0^tR_rdB^3_r+\frac{\sigma}{\nu+\sigma}\frac{1}{t}\int_0^t\int_{\mathbb{Y}}\epsilon_3(y)\,R_r\,\bar{N}(dr,dt).
\end{align}
Rewrite Eq. (\ref{sum}) as
\begin{align}\label{<S_t>}
   <S>_t=\frac{\Lambda}{\nu}-\left(\frac{\gamma+\nu+\sigma}{\nu+\sigma}\right)<I>_t+\bar{\Phi}_t,
\end{align}
where
\begin{align*}
\bar{\Phi}_t=&-\frac{1}{\nu}\left( \frac{S_t-S_0}{t}+ \frac{I_t-I_0}{t}+ \frac{\sigma}{\nu+\sigma}\frac{R_t-R_0}{t} \right)+\frac{1}{\nu}\left(\frac{\lambda_1}{t}\int_0^tS_rdB^1_r +\frac{1}{t}\int_0^t\int_{\mathbb{Y}}\epsilon_1(y)\,S_r\,\bar{N}(dr,dt)\right)\nonumber\\
&+\frac{1}{\nu}\left(\frac{\lambda_2}{t}\int_0^tI_rdB^2_r+\frac{1}{t}\int_0^t\int_{\mathbb{Y}}\epsilon_2(y)\,I_r\,\bar{N}(dr,dt)\right)\nonumber\\ &+\frac{1}{\nu}\left(\frac{\sigma}{\nu+\sigma}\frac{\lambda_3}{t}\int_0^tR_rdB^3_r+\frac{\sigma}{\nu+\sigma}\frac{1}{t}\int_0^t\int_{\mathbb{Y}}\epsilon_3(y)\,R_r\,\bar{N}(dr,dt) \right).
\end{align*}
From Lemma \ref{lemma-1} (i-ii),
\begin{align}\label{phi}
  lim_{t\rightarrow \infty} \bar{\Phi}_t=0,\quad a.s.
\end{align}
Therefore, Eq. (\ref{<S_t>}) becomes
\begin{align}\label{<S_t>-1}
   <S>_t=\frac{\Lambda}{\nu}-\left(\frac{\gamma+\nu+\sigma}{\nu+\sigma}\right)<I>_t.
\end{align}
Setting $Z=\ln I_t$ and applying It$\hat{o}$ formula to $Z$ yields,

\begin{align}\label{dln-I_t}
   dZ= d \ln I_t&=\frac{1}{I_t}dI_t-\frac{1}{2I^2_t}[dI_t]^2\nonumber\\
    &=(\beta S_t-(\nu+\gamma)-\varphi_2)dt+\lambda_2 I_tdB^2_t+\int_{\mathbb{Y}}\ln (1+\epsilon_2(y))\bar{N}(dt,dy){.}
\end{align}
Integrating both sides of Eq. (\ref{dln-I_t}) and dividing by $t$, gives
\begin{align}\label{dln-I_t-2}
  \frac{\ln I_t}{t}= \beta< S>_t-(\nu+\gamma)-\varphi_2+\frac{\lambda_2 I_tdB^2_t}{t}+\frac{1}{t}\int_{\mathbb{Y}}\ln (1+\epsilon_2(y))\bar{N}(dt,dy)+\frac{\ln I_0}{t}.
\end{align}
Upon plugging in $<S>_t$ of Eq. (\ref{<S_t>-1}) into Eq. (\ref{dln-I_t-2}), we get
\begin{align}\label{dln-I_t-3}
  \frac{\ln I_t}{t}&= \beta\left(\frac{\Lambda}{\nu}-\left(\frac{\gamma+\nu+\sigma}{\nu+\sigma}\right)<I>_t\right)-(\nu+\gamma)-\varphi_2+\frac{\lambda_2 I_tdB^2_t}{t}+\frac{1}{t}\int_{\mathbb{Y}}\ln (1+\epsilon_2(y))\bar{N}(dt,dy)+\frac{\ln I_0}{t}\nonumber\\
  &=  \beta\,\frac{\Lambda}{\nu}-(\nu+\gamma)-\varphi_2-\beta\left(\frac{\gamma+\nu+\sigma}{\nu+\sigma}\right))<I>_t+\frac{\lambda_2 I_tdB^2_t}{t}+\frac{1}{t}\int_{\mathbb{Y}}\ln (1+\epsilon_2(y))\bar{N}(dt,dy)+\frac{\ln I_0}{t}\nonumber\\
    &= \beta\,\frac{\Lambda}{\nu}-(\nu+\gamma+\varphi_2)-\beta\left(\frac{\gamma+\nu+\sigma}{\nu+\sigma}\right)<I>_t+\frac{\lambda_2 I_tdB^2_t}{t}+\frac{\psi_2(t)}{t}+\frac{\ln I_0}{t}\nonumber\\
  &\leq \beta\,\frac{\Lambda}{\nu}\left(1-\frac{\nu}{\beta\,\Lambda}(\nu+\gamma+\varphi_2)\right)-\beta\left(\frac{\gamma+\nu+\sigma}{\nu+\sigma}\right)<I>_t+\frac{\lambda_2 I_tdB^2_t}{t}+\frac{\psi_2(t)}{t}+\frac{\ln I_0}{t}\nonumber\\
  &\leq \beta\,\frac{\Lambda}{\nu}\left(1-\frac{1}{\xi}\right)-\beta\left(\frac{\gamma+\nu+\sigma}{\nu+\sigma}\, \right)<I>_t+\frac{\lambda_2 I_tdB^2_t}{t}+\frac{\psi_2(t)}{t}+\frac{\ln I_0}{t}\nonumber\\
  &\leq \beta\,\frac{\Lambda}{\nu}\left(1-\frac{1}{\xi}\right)-\beta\left(\frac{\gamma+\nu}{\nu+\sigma}\, \right)<I>_t+\frac{\lambda_2 I_tdB^2_t}{t}+\frac{\psi_2(t)}{t}+\frac{\ln I_0}{t}, \quad since \quad -\frac{\gamma+\nu+\sigma}{\nu+\sigma}< -\frac{\gamma+\nu}{\nu+\sigma}.
\end{align}
From $(f)$ and theorem of large numbers \cite{mao2007stochastic}
\begin{align}\label{psi}
 lim_{t\rightarrow \infty}\frac{\psi_2(t)}{t}=0, \quad a.s{,}
\end{align}
and
\begin{align}\label{B_t}
lim_{t\rightarrow \infty}\frac{B_t}{t}=0   \quad a.s.
\end{align}
 By applying  superior limit ($lim_{t\rightarrow \infty}\,sup $) on both sides of  to Eq. (\ref{dln-I_t-3}), gives
\begin{align}\label{dln-I_t-4}
 lim_{t\rightarrow \infty} sup \frac{\ln I_t}{t} &\leq  \beta\,\frac{\Lambda}{\nu}\left(1-\frac{1}{\xi}\right), a.s.
\end{align}
\begin{figure}[htb!]
\centering
    \subfloat[The susceptibility graph  ]{\includegraphics[width=0.5\textwidth]{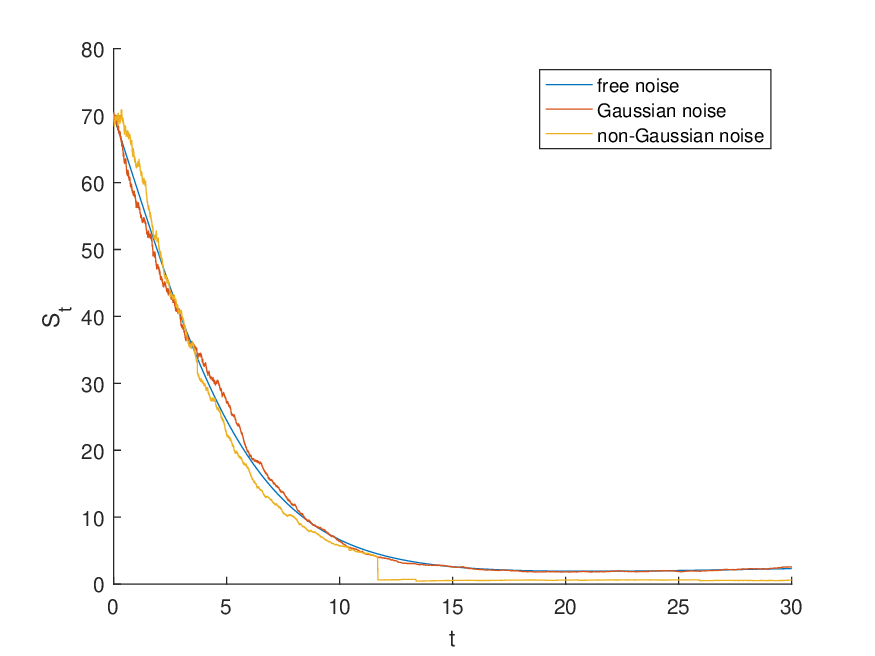}}
\subfloat[The {graph of {the} infected} population by COVID-19]{\includegraphics[width=0.5\textwidth]{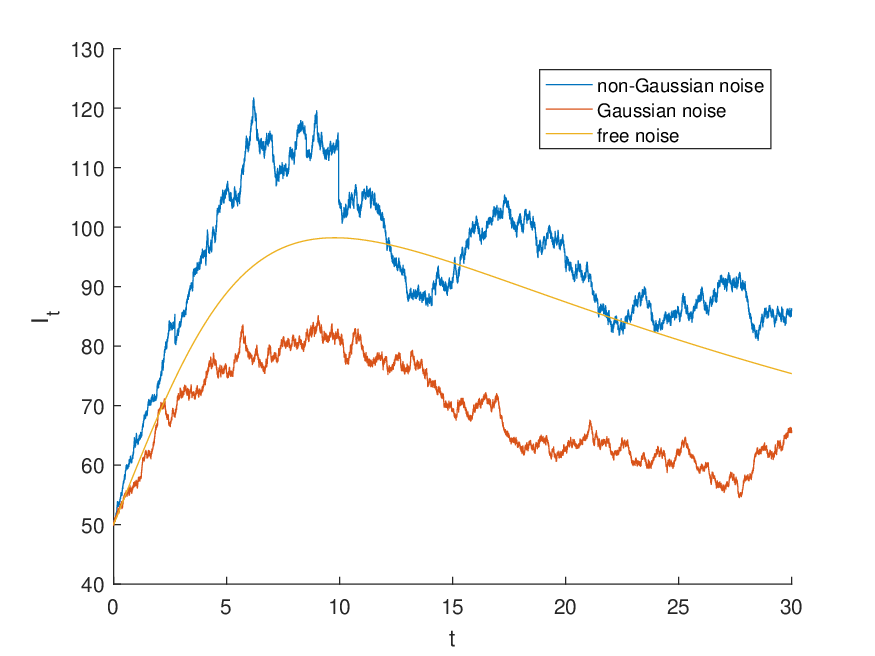}}
    \hfill
\subfloat[The graph of {the} removed people ]{\includegraphics[width=0.5\textwidth]{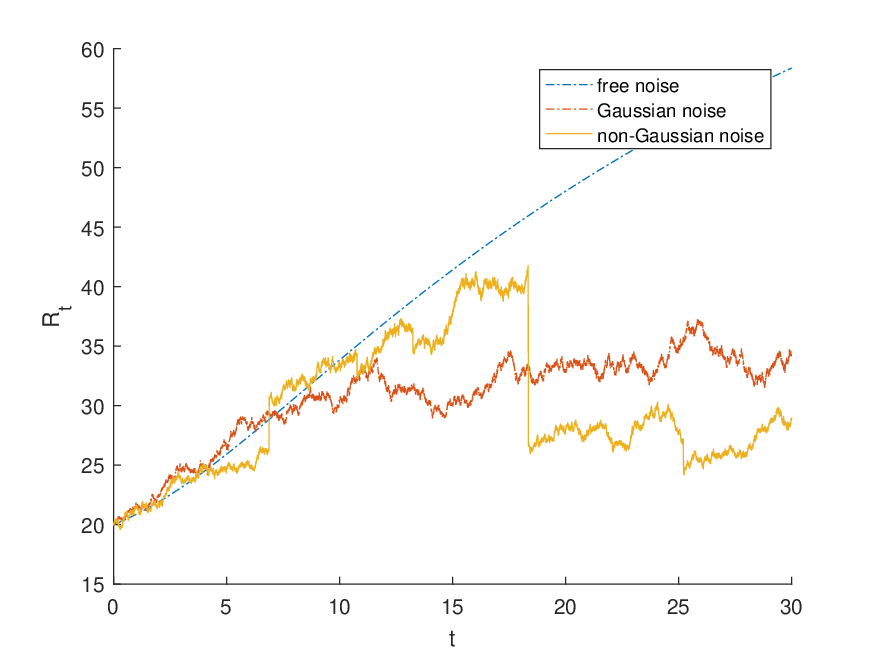}}
    \caption{\label{SIR-1} The numerical results of model (\ref{CV-BM-LM}). (a) The {graph of {the} susceptible}. (b) The {graph of {the} infected} people.(c) The {graph of {the} recoveblack} people. Parameters $S_0=70,\, I_0=50,\, R_0=20,\, \Lambda=0.0072,\,  \beta=0.002,\,\nu=0.001,\,\sigma=0.01,\,\gamma=0.02,\,\lambda_j=0.047$,\,$\epsilon_j(y)=0.004, \,\, j=1,2,3,\,\, \xi=0.9760 <1.$ }
\end{figure}
If $\xi <1$ holds, then $\beta\,\frac{\Lambda}{\nu}\left(1-\frac{1}{\xi}\right)<0$.\\
{ Therefore,}
\begin{align}\label{I}
 lim_{t\rightarrow \infty}I_t=0{.}
\end{align}
 From Definition \ref{def-1}, this implies that $I_t$ can tends to zero with probability one.
Similarly, we can show that
\begin{align}\label{lim-R}
 lim_{t\rightarrow \infty}<R>_t=0{.}
\end{align}
Recall Eq. (\ref{CD_X_sol}),
\begin{align*}
   X=\frac{\Lambda}{\nu}+X_0\,e^{-\nu\,t}.
\end{align*}
Using  Eqs. (\ref{I}) and (\ref{lim-R}), and

\begin{align*}
 lim_{t\rightarrow\infty}X=lim_{t\rightarrow\infty}(S_t+I_t+R_t)= \frac{\Lambda}{\nu},
\end{align*}
we obtain
\begin{align*}
  lim_{t\rightarrow \infty}<S>_t=\frac{\Lambda}{\nu}=S_0{.} \qquad \qquad \qquad \qquad \qquad \qquad \Box
\end{align*}

\end{proof}

\begin{figure}[htb!]
\centering
    \subfloat[The {graph of {the} susceptible}]{\includegraphics[width=0.5\textwidth]{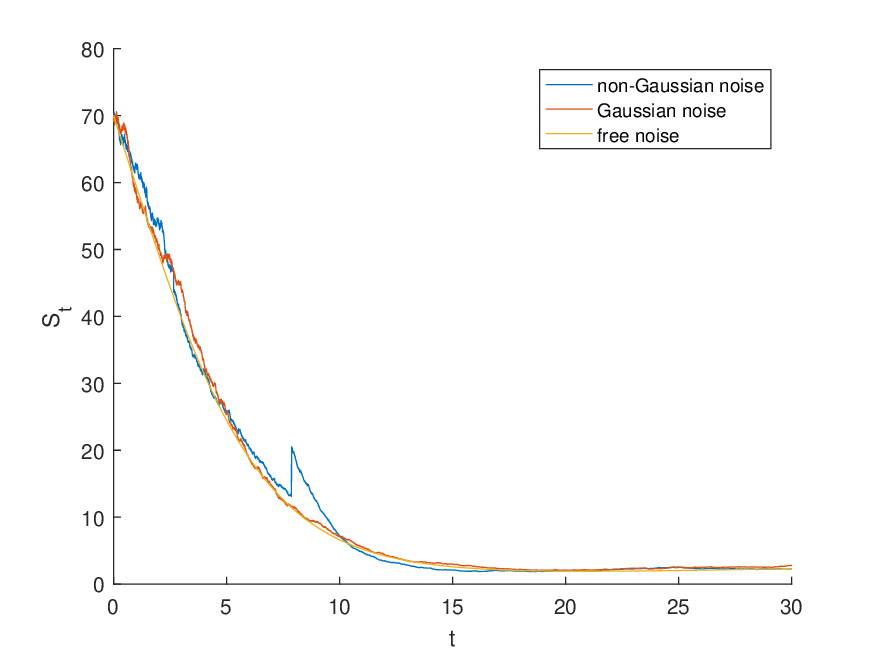}}
    \subfloat[The {graph of {the} infected} population by COVID-19]{\includegraphics[width=0.5\textwidth]{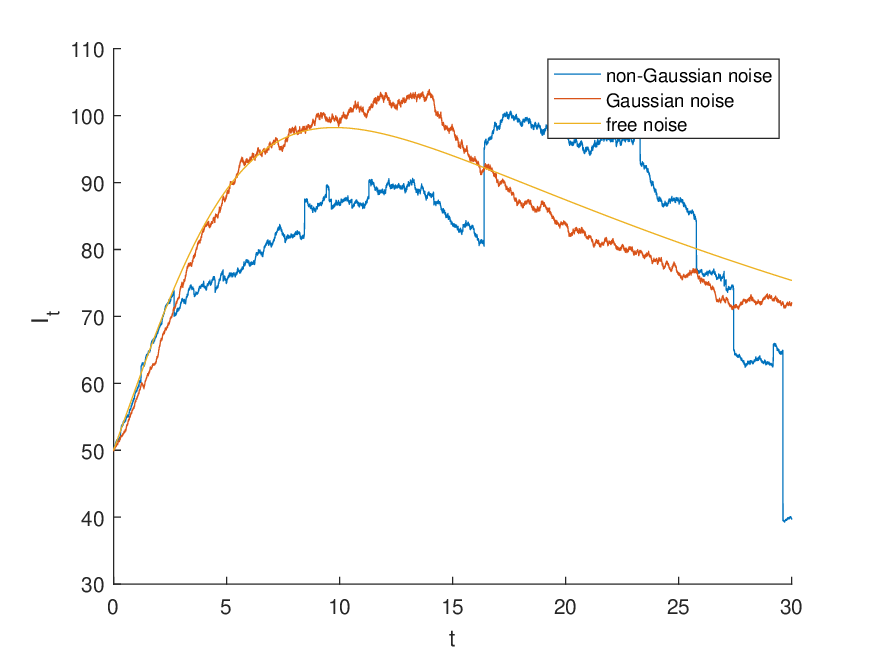}}
    \hfill
        \subfloat[The {graph of {the} recoveblack} people ]{\includegraphics[width=0.5\textwidth]{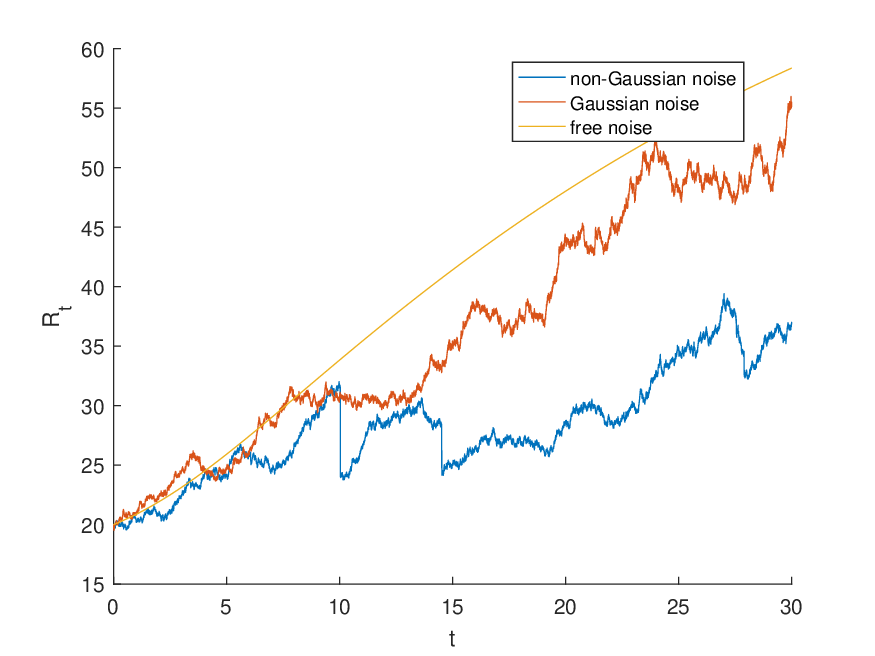}}
    \caption{\label{SIR-2} The numerical simulation of model (\ref{CV-BM-LM}). (a) The {graph of {the} susceptible}. (b) The {graph of the infected} people.(c) The {graph of the recoveblack} people from COVID-19. Parameters $S_0=70,\, I_0=50,\, R_0=20,\, \Lambda=0.0072,\,  \beta=0.002,\,\nu=0.001,\,\sigma=0.01,\,\gamma=0.02,\,\lambda_1=0.047,\, \lambda_2=0.019,\,\, \lambda_3=0.047,\,\epsilon_j(y)=0.004, \, j=1,2,3,\,\, \xi=1.02 > 1$.  }

\end{figure}
\begin{figure}[htb!]
\centering
    \subfloat[Noise free, i.e., $\lambda_j=0$,\,\,$\epsilon_j(y)=0$ ,\,\,$\xi_0=1.0286$]{\includegraphics[width=0.5\textwidth]{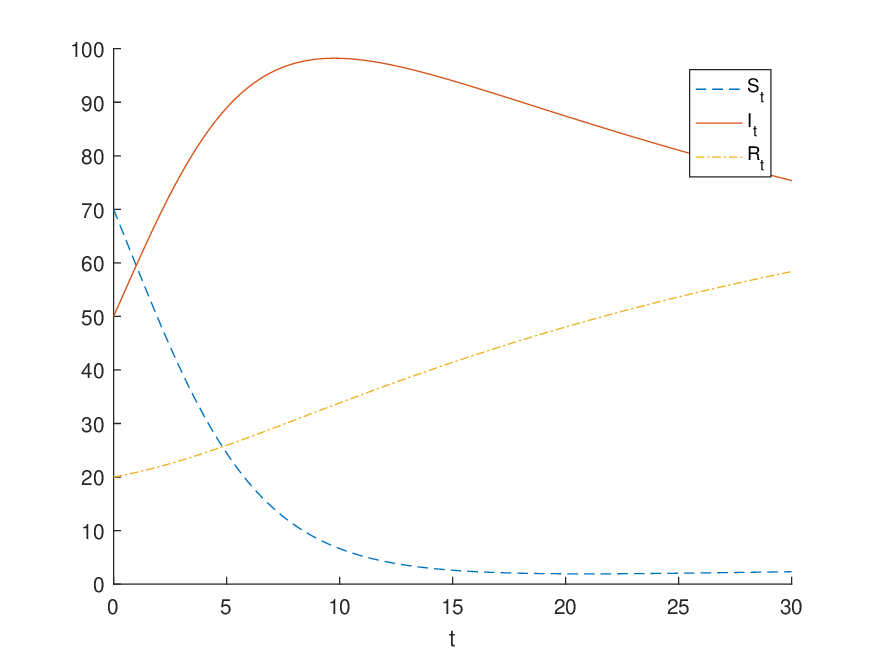}}
    \subfloat[  $\lambda_j=0.047$,\,$\epsilon_j(y)=0,$\,\,$\xi_0=1.0286$.]{\includegraphics[width=0.5\textwidth]{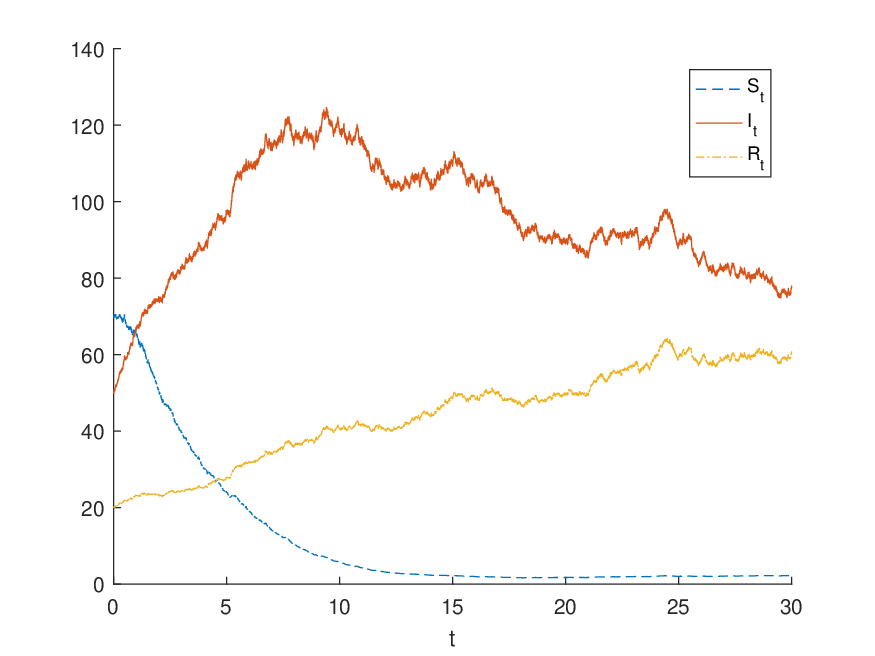}}
    \hfill
    \subfloat[$\lambda_j=0.047$,\,\,$\epsilon_j(y)=0.004$,\,\,$\xi_0=1.0286$.]{\includegraphics[width=0.5\textwidth]{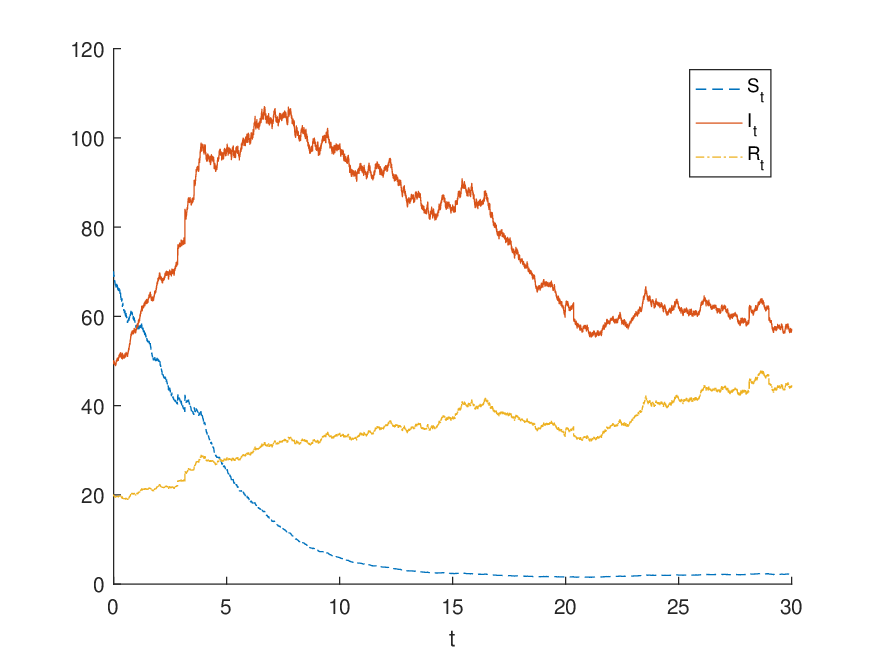}}
    \subfloat[$\lambda_j=0.47$,\,\,$\epsilon_j(y)=0.04$, $\xi_0=0.0720.$]{\includegraphics[width=0.5\textwidth]{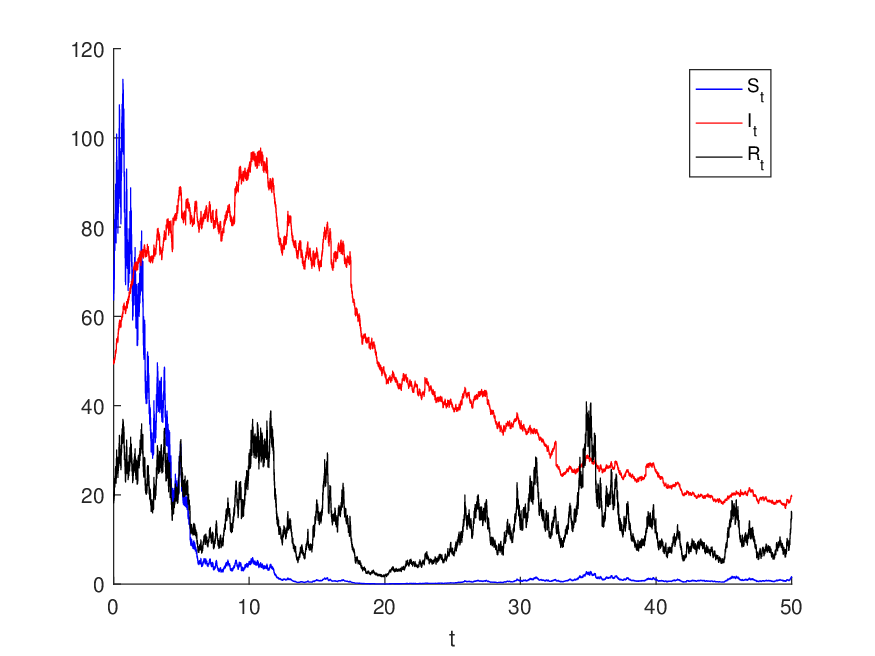}}
    \caption{\label{SIR-3}This Figure shows the numerical simulation of the stochastic COVID-19 model (\ref{CV-BM-LM}) with $S_0=70,\, I_0=50,\, R_0=20,\, \Lambda=0.0072,\,  \beta=0.002,\,\nu=0.001,\,\sigma=0.01,\,\gamma=0.02,\,\, \xi=0.9284 <1$, \, $j=1,2,3$.}

\end{figure}
\subsubsection{Persistence of the disease}
This section deals with the persistence in mean  of the disease in the model (\ref{CV-BM-LM}). Before we state the theorem, {we define} persistence in mean.
\begin{definition}
If $lim_{t\rightarrow \infty}<S>_t\,>0$,\quad  $lim_{t\rightarrow \infty}<I>_t\,>0$,\quad  $lim_{t\rightarrow \infty}<R>_t\,>0$, almost surely, then we can say system (\ref{CV-BM-LM}) is persistence in mean.
\end{definition}
\begin{theorem}\label{per-mean}
For given initial values $(S_0,I_0,R_0)\in \mathbb{R}_{+}^3$, the solution $(S_t,I_t,R_t)\in \mathbb{R}_{+}^3$ of  model (\ref{CV-BM-LM}) exists when  $\xi>1$, \\

$lim_{t\rightarrow \infty}<S>_t=\tilde{S}$,\quad $lim_{t\rightarrow \infty}<I>_t=\tilde{I}$, \quad $lim_{t\rightarrow \infty}<R>_t=\tilde{R}$, a.s,\\
where\\

$\tilde{S}=\frac{\Lambda}{\nu}-\frac{\gamma+\nu+\sigma}{\gamma+\nu}\,\frac{\Lambda}{\nu}\left(1-\frac{1}{\xi}\right)$,\quad $\tilde{I}=\frac{\nu+\sigma}{\gamma+\nu} \frac{\Lambda}{\nu}\left(1-\frac{1}{\xi}\right)$,\quad $\tilde{R}=\frac{1}{\gamma+\nu}\,\frac{\Lambda}{\nu}\left(1-\frac{1}{\xi}\right).$
\end{theorem}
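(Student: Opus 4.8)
Existence, uniqueness and positivity of the global solution are already furnished by Theorem~\ref{Ext.Soln}, so the real content is the identification of the three limits. The plan is to recycle the asymptotic identities established in the proof of Theorem~\ref{Exten-thr} and then invoke the standard persistence-in-mean machinery, the only new input being that the leading constant changes sign once $\xi>1$. After the stochastic averages are eliminated via Lemma~\ref{lemma-1}, relation (\ref{<S_t>-1}) gives $<S>_t=\frac{\Lambda}{\nu}-\frac{\gamma+\nu+\sigma}{\nu+\sigma}<I>_t+o(1)$, and It\^o's formula for $\ln I_t$ gives (\ref{dln-I_t-2}). Plugging the former into the latter, as in the computation (\ref{dln-I_t-3}), produces
\begin{align*}
\frac{\ln I_t}{t}=\Big(\beta\frac{\Lambda}{\nu}-(\nu+\gamma+\varphi_2)\Big)-\beta\,\frac{\gamma+\nu+\sigma}{\nu+\sigma}\,<I>_t+\Theta_t ,
\end{align*}
where $\Theta_t$ gathers $\frac{\lambda_2 B^2_t}{t}$, $\frac{\psi_2(t)}{t}$, $\frac{\ln I_0}{t}$ and the $\bar{\Phi}_t$-remainder coming from (\ref{<S_t>}); by $(f)$, (\ref{phi}) and the strong law of large numbers, $\Theta_t\to 0$ a.s. The crucial observation is $\beta\frac{\Lambda}{\nu}-(\nu+\gamma+\varphi_2)=(\nu+\gamma+\varphi_2)(\xi-1)$, which is strictly positive precisely because $\xi>1$.

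From here I would argue as follows. Lemma~\ref{lemma-1}(i) gives $0<I_t\le S_t+I_t+R_t=o(t)$, hence $\limsup_{t\to\infty}\frac{\ln I_t}{t}\le 0$ a.s. If this $\limsup$ were negative then $I_t\to 0$ exponentially, so $<I>_t\to 0$, and the displayed identity would force $\frac{\ln I_t}{t}\to(\nu+\gamma+\varphi_2)(\xi-1)>0$ — impossible; thus $\limsup_{t\to\infty}\frac{\ln I_t}{t}=0$ a.s. Together with the a priori bound $<I>_t\le <S_t+I_t+R_t>_t\to\Lambda/\nu$ (obtained by summing the equations of (\ref{CV-BM-LM}), integrating, dividing by $t$ and using Lemma~\ref{lemma-1}), the persistence-in-mean lemma in the spirit of \cite{zhou2016threshold} upgrades this to $\lim_{t\to\infty}\frac{\ln I_t}{t}=0$ a.s. Solving the identity for $<I>_t$ and simplifying with $(\nu+\gamma+\varphi_2)\xi=\beta\Lambda/\nu$ then gives $\lim_{t\to\infty}<I>_t=\tilde I>0$ with $\tilde I$ as claimed.

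It remains to read off $\tilde S$ and $\tilde R$. Substituting $\lim<I>_t=\tilde I$ back into (\ref{<S_t>-1}) yields $\lim<S>_t=\tilde S$; passing to the limit in (\ref{R}) (its martingale averages vanish by Lemma~\ref{lemma-1}) yields $(\nu+\sigma)\tilde R=\gamma\tilde I$, i.e.\ the stated $\tilde R$. Positivity of all three limits is immediate once $\xi>1$. I expect the only genuinely delicate step to be the passage from $\limsup\frac{\ln I_t}{t}\le 0$ to the full limit $\lim\frac{\ln I_t}{t}=0$ — equivalently, excluding oscillation of the running mean $<I>_t$ — which is exactly what the persistence-in-mean lemma (together with the boundedness of $<I>_t$) is designed to supply; the remaining manipulations duplicate those already carried out in the proof of Theorem~\ref{Exten-thr}.
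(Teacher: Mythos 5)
Your overall route is the same as the paper's: recycle the logarithmic identity from the extinction proof, kill the stochastic averages via Lemma~\ref{lemma-1}, solve for $<I>_t$, and back-substitute into (\ref{<S_t>-1}) and (\ref{R}) to read off $\tilde S$ and $\tilde R$. On the one genuinely delicate analytic point you are in fact more careful than the paper: the paper passes from Eq.~(\ref{<I>-1}) to the limit (\ref{I-bar}) by citing only the vanishing of the noise terms, never justifying $\lim_{t\to\infty}\ln I_t/t=0$; your contradiction argument for $\limsup_{t\to\infty}\ln I_t/t=0$, plus the appeal to the persistence-in-mean lemma of \cite{zhou2016threshold} to upgrade it to a genuine limit, supplies exactly the justification the paper omits, and you correctly identify it as the crux.

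However, the arithmetic in your last step does not close, and this is a genuine gap. Your displayed identity carries the exact coefficient $\beta\,\frac{\gamma+\nu+\sigma}{\nu+\sigma}$ in front of $<I>_t$; solving it with $\ln I_t/t\to 0$ and $\Theta_t\to 0$ gives $\lim_{t\to\infty}<I>_t=\frac{\nu+\sigma}{\gamma+\nu+\sigma}\,\frac{\Lambda}{\nu}\left(1-\frac{1}{\xi}\right)$, which is \emph{not} the stated $\tilde I=\frac{\nu+\sigma}{\gamma+\nu}\,\frac{\Lambda}{\nu}\left(1-\frac{1}{\xi}\right)$. The paper reaches its stated value only by silently promoting the last, inequality-weakened line of (\ref{dln-I_t-3}) (coefficient $\frac{\gamma+\nu}{\nu+\sigma}$) to the equality (\ref{DlnI}); an estimate used in one direction for extinction cannot be reused as an identity for persistence. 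Likewise your (correct) relation $(\nu+\sigma)\tilde R=\gamma\tilde I$ yields $\tilde R=\frac{\gamma}{\nu+\sigma}\tilde I$, which with the stated $\tilde I$ gives $\frac{\gamma}{\gamma+\nu}\,\frac{\Lambda}{\nu}\left(1-\frac{1}{\xi}\right)$, not the stated $\tilde R=\frac{1}{\gamma+\nu}\,\frac{\Lambda}{\nu}\left(1-\frac{1}{\xi}\right)$. So as written your proposal asserts conclusions that its own identity does not support: you must either derive the limits your exact identity actually gives (and note they differ from the theorem), or explain why the weakened coefficient may be substituted as an equality — which it cannot be without further argument.
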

\begin{proof}
Recall Eq. (\ref{dln-I_t-3})
\begin{align}\label{DlnI}
 \frac{\ln I_t}{t}= \beta\,\frac{\Lambda}{\nu}\left(1-\frac{1}{\xi}\right)-\beta\left(\frac{\gamma+\nu}{\nu+\sigma}\, \right)<I>_t+\frac{\lambda_2 I_tdB^2_t}{t}+\frac{\psi_2(t)}{t}+\frac{\ln I_0}{t}{,}
 \end{align}
or equivalently,
\begin{align}\label{<I>-1}
\beta\left(\frac{\gamma+\nu}{\nu+\sigma} \right)<I>_t= -\frac{\ln I_t}{t}+\beta\,\frac{\Lambda}{\nu}\left(1-\frac{1}{\xi}\right)+\frac{\lambda_2}{t} I_tdB^2_t+\psi_2(t)+\frac{\ln I_0}{t}.
\end{align}
From Lemma \ref{lemma-1} and Eqs. (\ref{phi}), (\ref{psi}) and (\ref{B_t}), we get
\begin{align}\label{I-bar}
  lim_{t\rightarrow \infty}  <I>_t=\frac{\nu+\sigma}{\gamma+\nu} \,\frac{\Lambda}{\nu}\left(1-\frac{1}{\xi}\right)=\tilde{I}, \quad a.s.
\end{align}
Substituting Eq. (\ref{I-bar}) into Eq. (\ref{<S_t>-1}), and taking limit on both sides, yields

\begin{equation}\label{S-bar}
lim_{t\rightarrow \infty} <S>_t=\frac{\Lambda}{\nu}-\frac{\gamma+\nu+\sigma}{\gamma+\nu}\,\frac{\Lambda}{\nu}\left(1-\frac{1}{\xi}\right)=\tilde{S}.
\end{equation}
Furthermore, applying $lim_{t\rightarrow\infty}$ to Eq. (\ref{R}) and replacing $<I>_t$ by Eq. (\ref{I-bar}), yields
\begin{align}\label{R-bar}
   lim_{t\rightarrow\infty}<R>_t=\frac{1}{\gamma+\nu}\,\frac{\Lambda}{\nu}\left(1-\frac{1}{\xi}\right)=\tilde{R}.
\end{align}
\end{proof}
The proof is complete. $\Box$
\begin{remark}
From the Theorems \ref{Exten-thr} and \ref{per-mean}  above, we can take the value of $\xi$ as the threshold of the system (\ref{CV-BM-LM}). The value of $\xi$ indicates the prevalence and extinction of the COVID-19. Here, we can observe that the Gaussian and jump noises have {a} significant effect on the {behavior} of the system (\ref{CV-BM-LM}).
\end{remark}

\section{Discussion and numerical experiments}\label{s6}
This section deals with the theoretical results of the investigated deterministic and stochastic epidemic systems by applying numerical simulations. Here to find out the impact of Gaussian  and non-Gaussian noises intensities on this epidemic dynamics, we compare the trajectories of the  deterministic and stochastic  systems. {We choose the initial value $(S_0,I_0,R_0)=(70,50,20),\,\,\Lambda=0.0072,\, \, \beta=0.002,\,\,\sigma=0.01,\,\, \,\,\gamma=0.02$. The other values of the parameters are given in the figures.}

{ Figure \ref{I_t-Deter} plots the numerical simulation of the deterministic epidemic model (\ref{CV-Det})}. Fig. \ref{I_t-Deter}a shows the results of Theorem \ref{Det-thrm} for different {values} of the reproductive number $\xi_0$. We can easily see from the results, the infectious disease of system (\ref{CV-Det}) goes to extinction for $\xi_0 <1$, almost surely, whereas the disease persists if $\xi_0 >1.$ {The parameter $\nu$ will lead to a decrease in $\xi_0$. This tells us the extinction of the disease is very fast as the $\nu$ increases, this phenomena is plotted in Fig.\ref{I_t-Deter}b}. As $\nu$ increases the value of $\xi_0$ is less than one, thus according to Theorem \ref{Det-thrm}, asymptotically results into
extinction of the COVID-19 in the population i.e., $I_t$ can go to zero with probability one. The phase line of COVID-19 epidemic model (\ref{CV-Det}) is given in Fig. \ref{I_t-Deter}c when $\xi_0 <1$ and $\nu=0.01$.

In Figures \ref{SIR-1} and \ref{SIR-2}, we fixed the  parameters $ \nu=0.001$, $\epsilon_j(y)=0.004,$ for $j=1,2,3,$ and $\mathbb{Y}=(0,\infty),\, \pi(\mathbb{Y})=1.$ Here, the value of the basic reproductive number $\xi_0$ is $1.0286$, and $\xi=0.9349$. Having these values, the solution $(S_t,I_t,R_t)$ of the system (\ref{CV-BM-LM}) satisfies the property in Theorem \ref{Exten-thr}, i.e.,
\begin{align*}
    lim_{t\rightarrow \infty}\frac{\ln I_t}{t}\leq \beta\,\frac{\Lambda}{\nu}\left(1-\frac{1}{\xi}\right) =-0.0015 < 0 \quad a.s.
\end{align*}
{This shows} the $I_t$ can vanish as $t$ goes to infinity. This happens because of the L\'evy noise effect. When $\lambda_2=0.019$ and $\xi=1.0093$, the solution $(S_t,I_t,R_t)$ of  Model (\ref{CV-BM-LM}) satisfies the condition in Theorem  \ref{per-mean}.
This scenario means that
\begin{align*}
   lim_{t\rightarrow \infty} <S>_t=7.1025,
\end{align*}
\begin{align*}
  lim_{t\rightarrow \infty}  <I>_t=0.0346,
\end{align*}
and
\begin{align*}
   lim_{t\rightarrow\infty}<R>_t=3.1460, \quad a.s.
\end{align*}
This numerical experiment shows that the COVID-19 will prevail. Noting that, Fig. \ref{SIR-1} and Fig. \ref{SIR-2} only differ by the value of $\lambda_2.$ The relationship of the variables $S_t,\,I_t,\,$ and $R_t$ is plotted in Figure \ref{SIR-3}. {When the reproductive number is $\xi_0$ less than 1, the stochastic reproductive number $\xi$ is also less than 1. For this case, the sample paths of the stochastic COVID-19 model is plotted in Figs.\ref{SIR-3}b, \ref{SIR-3}c and \ref{SIR-3}d. }

The numerical solutions imply that blackucing contact rate, washing hands, improving treatment rate, and environmental sanitation are the most crucial activities to eradicate COVID-19 disease from the community.
\section{{Conclusion}}\label{con}
The non-Gaussian noise plays a significant role in evolution of the epidemic dynamical
processes, like HIV, SARS, avian influenza, and so on. In this work, we have studied the stochastic COVID-19 epidemic model driven by both Gaussian and non-Gaussian noises. In Theorem \ref{Ext.Soln} , we proved that {the model} (\ref{CV-BM-LM}) has a unique non-negative solution. We also investigated some conditions for the extinction and persistence {during} the COVID-19 epidemic. We have applied a matlab  programm to study the behavior of the solution of the model. We have illustrated from numerical results the changing impact of the
noise intensities and the parameter $\nu$ on the size of infectious individuals. The results established in the present study can be used to examine dynamical behaviors for COVID-19, HIV, SARS, and so on.

By using the Euler Maruyama (EM) method \cite{higham2001algorithmic,1992Higher}, we gave some numerical {solutions} to illustrate the extinction and persistence of the disease in {the} deterministic system and stochastic {counterparts} for comparison.  We also obtained and compablack the basic reproduction numbers  for the deterministic model as well as  the stochastic model. From the comparison, we observed that the basic reproduction number of the stochastic COVID-19  model is much smaller than that of the deterministic COVID-19 model, {this} shows that the stochastic approach is more realistic than {the} deterministic one. In other words, the jump noise and white noise can change the behaviour of the model. The noises can force COVID-19 ( disease ) {to go out} extinct.

Furthermore, we showed that the disease can go to extinction if $\xi <1.$ While the COVID-19 becomes persistent for $\xi >1$; see Theorems \ref{Exten-thr} and \ref{per-mean}.

 From the findings, we concluded that if $\xi <1$, it is possible that the spread of the disease can be controlled, but for $\xi >1$, COVID-19 can be persistent. $\frac{\beta\,\Lambda}{\nu}\geq \varphi_2$ implies that the Gaussian and non-Gaussian noises are small. From this result, we conclude that efforts should be encouraged in order to achieve a disease-free population.


\section*{Availability of data and materials}
 The authors confirm that the data supporting the findings of this study are available within the articles cited { therein}.

\section*{Acknowledgment}
This research was supported by King Abdulaziz University Jeddah Saudi Arabia and partially supported by the NSFC grants {12001213}.
%

\section*{Competing interests}
 The authors declare that there is no conflict of interest regarding the publication of this paper.


\section*{Authors Contribution}
Authors have equally contribution in preparing this manuscript.


\section*{Funding}
Supported by King Abdulaziz University Jeddah KSA.

\fontsize{10}{10}\selectfont{\bibliography{References.bib}}

\begin{thebibliography}{99}
\bibitem{SOHRABI202071}
Sohrabi, C., Alsafi, Z., O’Neill, N., Khan, M., Kerwan, A., Al-Jabir, A., Iosifidis, C.,andAgha, R. World health organiza-tion declares global emergency: A review of the 2019 novel coronavirus (covid-19). \emph{Int J surgery} 76(2020), 71–76.
\bibitem{WHO2020Nov}
WHO Covid-19 weekly epidemiological update. https://www.who.int/publications/m/item/weekly-epidemiological-update10-november-2020, (2020).
\bibitem{2020A}
Chen, T. M., Rui, J., Wang, Q. P., Zhao, Z. Y., Cui, J. A.,and Yin, L.  A mathematical model for simulating the phase-based transmissibility of a novel coronavirus. \emph{Infect Dis Poverty} 9(2020).
\bibitem{hou2020effectiveness}
Hou,  C.,  Chen,  J.,  Zhou,  Y.,  Hua,  L.,  Yuan,  J.,  He,  S.,  Guo,  Y.,  Zhang,  S.,  Jia,  Q.,  Zhao,  C.,et al.   The effectiveness of quarantine of Wuhan city against the corona virus disease 2019 (covid-19):  A well-mixed seir model analysis. \emph{J Med.virology }(2020).

\bibitem{kucharski2020early}
Kucharski, A. J., Russell, T. W., Diamond, C., Liu, Y., Edmunds, J., Funk, S., Eggo, R. M., Sun, F., Jit, M., Munday, J. D.,et al.  Early dynamics of transmission and control of covid-19:  a mathematical modelling study. \emph{The lancet inf. diseases}(2020).

\bibitem{okuonghae2020analysis}
Okuonghae, D.,and Omame, A. Analysis of a mathematical model for covid-19 population dynamics in lagos, nigeria. \emph{Chaos,Solitons \& Fractals} 139(2020), 110032.
\bibitem{aa}
{Sher, M., Shah, K., Khan, Z.A., Khan, H., Khan, A.: Computational and theoretical modeling of the transmission
dynamics of novel COVID-19 under Mittag-Leffler power law. \emph{Alex. Eng. J.} (2020).
https://doi.org/10.1016/j.aej.2020.07.014.}
\bibitem{yang2020mathematical}
Yang, C.,and Wang, J. A mathematical model for the novel coronavirus epidemic in wuhan, china. \emph{Math. Biosci. Engneering} 17, 3 (2020), 2708–2724.
{\bibitem{chen2020mathematical}
Chen, T.-M., Rui, J., Wang, Q.-P., Zhao, Z.-Y., Cui, J.-A.,and Yin, L. A mathematical model for simulating the phase-based transmissibility of a novel coronavirus. \emph{Infect Dis Poverty.} 9, 1 (2020), 1–8.
\bibitem{wang2020phase}
Wang, H., Wang, Z., Dong, Y., Chang, R., Xu, C., Yu, X., Zhang, S., Tsamlag, L., Shang, M., Huang, J.,et al. Phase-adjustedestimation of the number of coronavirus disease 2019 cases in wuhan, china. \emph{Cell discovery} 6, 1 (2020), 1–8.
\bibitem{anas}
Khalaf, A.D., Abouagwa, M., Almushaira, M.,Wang, X.J.: Stochastic Volterra integral equations with
jumps and the Strong superconvergence of Euler–Maruyama approximation. \emph{J. Comput. Appl. Math}.
382, 113071 (2021).
\bibitem{dalal2008stochastic}
Dalal, N., Greenhalgh, D.,and Mao, X.  A stochastic model for internal HIV dynamics. \emph{J Math Anal Appl.} 341, 2 (2008),1084–1101.
\bibitem{at1} Abdon Atangana, Blind in a commutative world: Simple illustrations with functions and chaotic attractors, \emph{Chaos, Solitons $\&$ Fractals}, Volume 114, September 2018, Pages 347-363
" https://doi.org/10.1016/j.chaos.2018.07.022
\bibitem{at2} Abdon Atangana, Fractional discretization: The African's tortoise walk,\emph{ Chaos, Solitons $\&$ Fractals}, Volume 130, January 2020, 109399.
https://doi.org/10.1016/j.chaos.2019.109399
\bibitem{at3} Abdon Atangana, Fractal-fractional differentiation and integration: Connecting fractal calculus and fractional calculus to pblackict complex system, \emph{ Chaos, Solitons $\&$ Fractals}, Volume 102, September 2017, Pages
396-406.
\bibitem{at4} B. Ghanbari and A. Atangana, Some new edge detecting techniques based on fractional derivatives with non-local and non-singular kernals, \emph{ Adv. Differ. Equations} volume 2020.
\bibitem{d2}
Yildirim A.,  Kocak H., Sunil Kumar : A fractional model of gas
dynamics equation by using Laplace transform - Z. \emph{ Naturforsch}
vol:67a pp:389-396 (2012)
\bibitem{d3}B. Ghanbari, S. Kumar, R. Kumar, A Study of behaviour for immune and tumor cells in immunogentic tumor model with non-singular fractional derivative, \emph{Choas Solitons and Fractals} 133(2020):109619.
 \bibitem{d4}  Kumar S.,  Kumar R., Cttani  C.,  Samet B., Chaotic behaviour of fractional pblackator-prey dynamical system, \emph{Choas Solitons and Fractals} 135 (2020):109811.
 \bibitem{d5} Kumar S., Kumar  A., Samet  B.,  Gomez -Aguilar J. F., Osman M. S., A choas study of tumor and effector cells in fractional tumor immune model for cancer treatment, \emph{Choas Soliton and Fractals} 141(2020): 110321.}
{\bibitem{bao2012stochastic}
Bao, J.,and Yuan, C. Stochastic population dynamics driven by Lévy noise. \emph{J Math Anal Appl.} 391, 2 (2012), 363–375.
\bibitem{liu2018analysis}
Liu, Q., Jiang, D., Hayat, T.,and Ahmad, B.  Analysis of a delayed vaccinated sir epidemic model with temporary immunity and Lévy jumps. \emph{Nonlinear Anal. Hybrid Syst}. 27(2018), 29–43.
\bibitem{zhang2017dynamics}
Zhang, X., Jiang, D., Hayat, T.,and Ahmad, B.  Dynamics of a stochastic sis model with double epidemic diseases driven by Lévy jumps.Phys.\emph{ A Statal Mech. its Appl}. 471(2017), 767–777.
\bibitem{sun2020dynamics}
Sun,  F. Dynamics   of   an   imprecise   stochastic   holling   ii   one-pblackator   two-prey   system   with   jumps. \emph{arXivprepr.arXiv:2006.14943}(2020).
\bibitem{sun2020dy}
Sun,  F.   Dynamics  of  an  imprecise  stochastic  multimolecular  biochemical  reaction  model  with  Lévy  jumps. \emph{arXivprepr.arXiv:2004.14163}(2020).
\bibitem{danny}
Tesfay. D, Wei P.,  Zheng Y.,Duan J. , and  Kurths J., “Transitions between metastable states in a simplified model for the thermohaline circulation under random fluctuations,”\emph{Appl. Math. Comput}.,vol. 369, p. 124868, 2020
\bibitem{danny1}
 Tesfay D., Serdukova L, Zheng Y., Wei P.,  Duan J., and  Kurths J.
Influence of extreme events modeled by Lévy flight on global thermohaline circulation stability.\emph{
Nonlinear Pro. Geophysics.}
https://doi.org/10.5194/npg-2020-31 (2020).}
\bibitem{zhang2020dynamics}
Zhang, Z., Zeb, A., Hussain, S.,and Alzahrani, E.  Dynamics of covid-19 mathematical model with stochastic perturbation. \emph{Adv. Differ. Equations} 2020, 1 (2020), 1–12.
\bibitem{tesfay2021logistic}
Tesfay,  A.,  Tesfay,  D.,  Brannan,  J.,and Duan,  J.   A  logistic-harvest  model  with  allee  effect  under  multiplicative  noise. \emph{Stochastics Dyn}.(2021), 2150044.
\bibitem{applebaum2009levy}
Applebaum, D. \emph{Lévy processes and stochastic calculus}. Cambridge university press, 2009

\bibitem{berrhazi2018stochastic}
Berrhazi, B.-e., ElFatini, M., CaraballoGarrido, T.,and Pettersson, R. A stochastic SIRI epidemic model with Lévy noise. \emph{Discret. Contin. Dyn. Syst.B}, 23 (9), 3645-3661.(2018).

\bibitem{kiouach2020long}
Kiouach, D.,and Sabbar, Y.  The long-time behavior of a stochastic SIR epidemic model with distributed delay and multidi-mensional Lévy jumps. \emph{arXiv Prepr.arXiv:2003.08219} (2020).

\bibitem{zhou2016threshold}
Zhou, Y.,and Zhang, W.  Threshold of a stochastic SIR epidemic model with Lévy jumps.Phys. \emph{A Statal Mech. its Appl}. 446(2016), 204–216.

\bibitem{tesfay2020mean}
Tesfay, A., Tesfay, D., Khalaf, A.,and Brannan, J.  Mean exit time and escape probability for the stochastic logistic growth model with multiplicative $\alpha$-stable Lévy noise. \emph{Stochastics Dyn}.(2020), 2150016.

\bibitem{duan2015introduction}
Duan, J. \emph{An introduction to stochastic dynamics}, vol. 51. Cambridge University Press, 2015.
{
\bibitem{higham2001algorithmic}
Higham, D. J.  An algorithmic introduction to numerical simulation of stochastic differential equations. \emph{SIAM review} 43, 3(2001), 525–546.}
\bibitem{1992Higher}
Kloeden, P. E.,and Platen, E.  Higher-order implicit strong numerical schemes for stochastic differential equations. \emph{J. Stat.Phys}. 66, 1-2 (1992), 283–314.

\bibitem{Siakalli2009Stability}
Siakalli, and Michailina. Stability properties of stochastic differential equations driven by Lévy noise. PhD thesis, \emph{Sc. Math.Stat., University of Sheffield}, 2009.

\bibitem{0Impulsive}
Khalaf, A. D., Tesfay, A.,and Wang, X.  Impulsive stochastic volterra integral equations driven by Lévy noise. \emph{Bull. Iran.Math}. Soc.(2020), 1–19.

\bibitem{Cai2017A}
Cai, Y., Kang, Y.,and Wang, W.  A stochastic SIRS epidemic model with nonlinear incidence rate. \emph{Appl. Math. Comput}. 305(2017), 221–240.
\bibitem{2015A}
Zhu, L.,andHu, H. A stochastic SIR epidemic model with density dependent birth rate. \emph{Adv. Differ. Equations} 2015, 1 (2015),33
{\bibitem{2017Epidemic}
Chen, H., Huang, F., Zhang, H.,andLi, G. Epidemic extinction in a generalized susceptible-infected-susceptible model. \emph{J Stat Mech-Theory E} , 1 (2017), 013204}.
\bibitem{mao2007stochastic}
Mao, X. \emph{Stochastic differential equations and applications}. Elsevier, 2007.

\end{thebibliography}
\end{document}